\newtheorem{definition}{Definition}
\newtheorem{lemma}{Lemma}
\newtheorem{theorem}{Theorem}
\newtheorem{remark}{Remark}
\newtheorem{example}{Example}
\newtheorem{claim}{Claim}
\begin{document}

\title{Grouped k-threshold random grid-based visual cryptography scheme}
\author{Xiaoli Zhuo, Xuehu Yan, and Wei Yan
\thanks{Xiaoli Zhuo, Xuehu Yan, and Wei Yan are with the College of Electronic Engineering, National University of Defense Technology, Hefei 230037, China, and also with Anhui Province Key Laboratory of Cyberspace Security Situation Awareness and Evaluation, Hefei 230037, China~(email: yan.wei2023@nudt.edu.cn). (Corresponding author: Wei Yan)}
}

\markboth{Journal of \LaTeX\ Class Files,~Vol.~14, No.~8, August~2021}%
{Shell \MakeLowercase{\textit{et al.}}: A Sample Article Using IEEEtran.cls for IEEE Journals}


\maketitle

\begin{abstract}
Visual cryptography schemes (VCSs) belong to a category 
of secret image sharing schemes that do not require cryptographic knowledge for decryption, 
instead relying directly on the human visual system. 
Among VCSs, random grid-based VCS (RGVCS) has garnered 
widespread attention as it avoids pixel expansion while requiring no basic matrices design. 
Contrast, a core metric for RGVCS, 
directly determines the visual quality of recovered images, 
rendering its optimization a critical research objective.
However, existing $(k,n)$ RGVCSs still fail to attain theoretical upper bounds on contrast, 
highlighting the urgent need for higher-contrast constructions.
In this paper, we propose a novel sharing paradigm for RGVCS that constructs $(k,n)$-threshold schemes from arbitrary $(k,n')$-threshold schemes $(k \leq n'\leq n)$, termed \emph{$n'$-grouped $(k,n)$ RGVCS}. 
This paradigm establishes hierarchical contrast characteristics:
participants within the same group achieve optimal recovery quality, 
while inter-group recovery shows a hierarchical contrast.
We further introduce a new contrast calculation formula tailored to the new paradigm. 
Then, we propose a contrast-enhanced $(k,n)$ RGVCS by setting $n'= k$,  
achieving the highest contrast value documented in the existing literature. 
Theoretical analysis and experimental results demonstrate the superiority of 
our proposed scheme in terms of contrast.  
\end{abstract}

\begin{IEEEkeywords}
Random grids, visual cryptography scheme, contrast. 
\end{IEEEkeywords}

\section{Introduction}
In the era of digital transformation, 
the security of sensitive information is critical in domains such as finance, healthcare, and military operations.
Although cryptographic techniques protect secret data, centralized storage introduces risks, including single points of failure, insider threats, and targeted attacks.
To address these challenges, secret sharing (SS) has emerged as a solution. 
This technique divides secret information into multiple parts, termed \emph{shares},
which are distributed among distinct parties.
The original secret can only be reconstructed when a minimum number of shares meeting specific conditions are combined. 
The most widely used form of access structure in SS is the $(k,n)$-threshold scheme \cite{k_n_threshold}, in which a secret is divided into $n$ shares, and reconstruction requires at least $k$ shares. 
Furthermore, the $(k,n)$-threshold structure has been extended to 
$(k,\infty)$ access structure\cite{evolving_k_threshold,tcom}, 
general access structure\cite{general_access_structure}, and ramp secret sharing\cite{ramp1,ramp2}. 

Although SS is initially designed for textual or numerical data, 
the growing demand for protecting multimedia data, such as images, has led to 
the emergence of secret image sharing (SIS).
It divides the information of a secret image into multiple shares (shadow images) 
distributed to multiple participants, and only by combining a sufficient number of qualified shadow images 
can the secret image be recovered. 
Among the mainstream SIS schemes, three approaches stand out: 
the visual cryptography scheme (VCS)\cite{vcs1,vcs2}, 
the polynomial-based SIS\cite{k_n_threshold, polynomial3}, and the CRT-based SIS\cite{CRT1}. 

Unlike the other two schemes that require computational functions for recovery, 
VCS does not need complex decryption methods, 
but instead relies on the human visual system (HIS) to recover the secret 
by stacking a certain number of shadow images.
Due to its simplicity of implementation and high intuitiveness, 
VCS is particularly suitable for use in low-tech environments or resource-constrained scenarios.  
However, VCS also has some limitations. 
For instance, 
VCS cannot achieve lossless recovery. 
Hence, enhancing the visual quality of the recovered images\cite{visual_quality1,visual_quality3} 
has become a crucial goal in VCS. 
Additionally, 
pixel expansion in traditional VCS, like basic matrix-based VCS (BMVCS)\cite{vcs1}, 
always leads to an increase in image size, 
which raises storage and transmission costs. 
Thus, many studies are dedicated to 
minimizing pixel expansion\cite{pixel_expansion_1,pixel_expansion_3} or 
proposing size-invariant VCS (SIVCS) without pixel expansion, 
including probabilistic VCS (PVCS)\cite{PVC1, PVC3} and 
random grid-based VCS (RGVCS)\cite{RGVCS1,RGVCS2}. 
  
In VCS, contrast serves as a critical metric for evaluating the visual quality of recovered images, 
where higher contrast correlates with enhanced clarity. 
Due to the reduced storage and transmission overhead associated with shadow images in SIVCS, 
researchers predominantly favor designing contrast enhancement schemes within SIVCS. 
PVCS-based approaches typically involve the design of intricate basis matrices 
and frequently employ linear programming techniques to ascertain optimal contrast\cite{pvc_contrast_enhancement1,pvc_contrast_enhancement2}, 
which significantly escalates computational demands and complexity. 
Unlike PVCS, RGVCS avoids the need for intricate probabilistic models and basic matrices designs, 
thereby facilitating more efficient and computationally less intensive contrast optimization processes. 

Early RGVCS studies\cite{2011_chen_tsao,2013_wu} primarily focused on achieving the $(k,n)$-threshold,  
but yielded relatively low contrast values.  
Subsequent work by Guo et al.\cite{2013_guo} improved contrast 
by distributing all $n$ shares through repeated $(k, k)$ RGVCS executions, 
while Shyu\cite{2015_shyu} and Yan et al.\cite{2018_yan} introduced cyclic assignment of the first $k$ shares to 
the remaining $n-k$ shares, further enhancing the contrast. 
However, there still remains a gap between 
the contrast of current $(k,n)$ RGVCSs and the theoretical upper bound\cite{14TIFS}.
Additionally, numerous studies\cite{XOR_1,XOR_2} have proposed XOR-based RGVCS to improve contrast via recovery mechanisms. However, XOR recovery requires computational devices, making it less straightforward and efficient than OR-based RGVCS. 

Traditional VCS treat all shares uniformly, resulting in undifferentiated recovery that cannot support hierarchical access control.
To address this limitation, a priority-based VCS model was introduced by Hou \emph{et al.}\cite{priority_hou}, 
where weights are assigned to participants to achieve varying recovery effects, 
with higher weights corresponding to better recovery quality, 
but it suffers from inconsistent share transmittance. 
Subsequent improvements by Yang \emph{et al.}\cite{priority_yang} resulted in a VCS 
with consistent transparency and support for arbitrary privilege levels, 
albeit at the cost of basic matrices design. Building on this, 
Fan \emph{et al.}\cite{priority_fan} proposed an RGVCS with adaptive priority, 
eliminating basic matrices design and simplifying implementation. 
Liu \emph{et al.}\cite{priority_liu} further expanded the functionality, 
enabling support for multiple decryption and lossless recovery.
Nevertheless, in all the aforementioned schemes, each share is assigned a fixed weight, and the recovery quality is entirely determined by the sum of the weights of the selected shares, where higher weights directly yield superior recovery outcomes. 
This fixed-weight mechanism lacks flexibility to dynamically adjust share importance based on actual requirements, and further risks substantial degradation in recovery quality when high-weight shares are lost.

In this paper, we focus on the construction of OR-based $(k,n)$ RGVCS with higher contrast. 
We propose a novel sharing paradigm for RGVCS, 
enabling the construction of $(k,n)$ RGVCS from arbitrary $(k,n')$-threshold schemes where $k \leq n'\leq n$, termed \emph{$n'$-grouped $(k,n)$ RGVCS}. 
Based on this paradigm, we develop a $k$-grouped $(k,n)$ RGVCS and 
successfully achieve a breakthrough in contrast.
The main contributions of this paper are enumerated as follows:
\begin{enumerate}
	\item A novel sharing paradigm for RGVCS is proposed, wherein the quality of recovered images exhibits hierarchical characteristics. Furthermore, we introduce a new contrast calculation formula tailored to this paradigm, which precisely quantifies hierarchical recovery properties. By abandoning fixed-weight allocation in favor of a group-based structure, our scheme achieves hierarchical recovery effects that surpass traditional priority-based VCS in both reconstruction quality and flexibility.
	\item A novel $(k,n)$ RGVCS under the new paradigm is proposed, achieving the currently optimal contrast. Compared to existing schemes, our scheme significantly eliminates the final global shuffling operation and improves the contrast. Furthermore, our scheme demonstrates a hierarchical effect in contrast, 
making it particularly suitable for scenarios where different participants exhibit varying recovery effects.
\end{enumerate}

In the remainder of this paper, Section \ref{section:Preliminaries} discusses the fundamentals and prior research of RGVCS.
A $n'$-grouped $(k,n)$ RGVCS and its contrast analysis are introduced in Sections \ref{section:grouped k_n'_n} and \ref{contrast}, respectively.
Section \ref{section n'=k} provides a detailed and computable formula for 
the contrast of $k$-grouped $(k,n)$ RGVCS.
The experimental results and analysis are presented in Section \ref{section:experiment}. 
Section \ref{section:conclusion} concludes this work.

\section{Preliminaries}{\label{section:Preliminaries}}
In this section, 
we provide the model and algorithms for $(k,n)$ RGVCS, along with 
some relevant definitions and evaluation metrics. 
To facilitate the discussion, 
we first introduce some notations that will be employed throughout this paper.
The symbol $\otimes$ denotes the logical $OR$ operation, which 
corresponds to the superimposition operation in the recovery phase of $(k,n)$ RGVCS. 
The symbol $\oplus$ represents the logical $XOR$ operation.  
Let $T_S$ denote an element randomly selected from the set $S=\{a_1,a_2,\ldots,a_n\}$, 
where $T_S$ takes the value $a_i$ with probability $\frac{1}{n}$ for each $i$ $(1\leq i \leq n)$. 
Let $I_S{(x)}$ denote the indicator function of a set $S$, defined by: 
\begin{equation}
  I_S{(x)} = 
  \begin{cases}
    1, &\text{if}~x \in S\\
    0, &\text{if}~x \notin S
  \end{cases}. \notag
\end{equation}
Let $|S|$ denote the cardinality of $S$.\footnotemark
\footnotetext{
  If $S$ is a multiset, its cardinality is defined as the total count of all element occurrences 
  in the set. For example, given $S = \{0,1,2,1\}$, we have $|S| = 4$. 
}
Additionally, 
let $\pi \in S_n$ be a permutation belonging to the symmetric group on $n$ elements, 
where $S_n$ represents the symmetric group of all possible permutations of $n$ distinct elements. 
For an input bit sequence $(b_1,b_2,\ldots,b_n)$, the permuted output is defined as 
$\pi(b_1,b_2,\ldots,b_n)\stackrel{\triangle}{=} (b_{\pi(1)},b_{\pi(2)},\ldots,b_{\pi(n)})$, 
where $\pi(i)~(1\leq i \leq n)$ denotes the action of the permutation $\pi$ on the index $i$, 
mapping it to a new position in the sequence. 
\vspace{-8pt} 


\subsection{Traditional $(k,n)$ RGVCS Model}
The traditional $(k,n)$ RGVCS consists of two phases: the sharing phase and the recovery phase. 
During the sharing phase, 
the secret image is encrypted into $n$ shadow images of identical size to the original secret image, 
which are then distributed to participants. 
In the recovery phase, 
the secret information can be reconstructed by collecting no fewer than $k$ shadow images 
and performing alignment and superposition operations. 
Taking secret image $S$ as an example, the sharing and recovery phases are described as follows:
\begin{enumerate}
\item The sharing phase: 
when encrypting $S$, a pixel-by-pixel encryption approach is adopted. 
For each pixel $s$, 
first, it is encrypted into $k$ fundamental bits, denoted as $b_1,b_2,\ldots,b_k$, forming the initial bit set $\mathcal{K}$;
then, the remaining $n-k$ bits are generated through specific bit transformation operations
 (including but not limited to random assignment, repeated assignment, or cyclic assignment, etc.);
finally, these $n$ bits are randomly rearranged, 
and sequentially distributed to corresponding positions in each shadow image.
After completing the encryption for every position in $S$, $n$ shadow images are ultimately formed, 
represented as $\mathcal{SC} \stackrel{\triangle}{=} \{SC_1,SC_2,\ldots,SC_n\}$.
\item The recovery phase:
the recovered image $R$ obtained by stacking any $t~(t\geq k)$ shadow images from $\mathcal{SC}$ 
can reveal the secret information. 
\end{enumerate}

\vspace{-8pt} 
\subsection{$(k,n)$ RGVCS Algorithms}
There are numerous encryption algorithms to accomplish the generation of shadow images in the sharing phase, 
with typical examples including 
Chen \& Tsao's scheme\cite{2011_chen_tsao}, Wu \& Sun's scheme\cite{2013_wu}, Yan's scheme\cite{2018_yan}, Shyu's scheme\cite{2015_shyu} and so on. 
Algorithm \ref{algorithm:k_k} and Algorithm \ref{algorithm:k_n_RGVCS} 
describe the encryption steps for a single secret pixel in these schemes. 
For a secret pixel $s$, 
the first step is to execute $(s,k,k)$ RGVCS (Algorithm \ref{algorithm:k_k}) to generate $\mathcal{K}$. 
Subsequently, the generation of the remaining $n-k$ bits varies among different schemes (see step 3 of Algorithm \ref{algorithm:k_n_RGVCS} for details). 
Finally, the $n$ generated bits are randomly rearranged and distributed to each shadow image. 
By applying the same encryption operation to each position of the secret image, 
$\mathcal{SC}$ can be generated. 

During the recovery phase, 
decryption is directly achieved through a simple superimposition method. 

\subsection{Light Transmission}
The random grid was initially proposed by 
Kafri and Keren \cite{1987_kafri_keren} in 1987. They 
described the random grid as a two-dimensional array of pixels,
where each pixel is either completely transparent or entirely opaque, 
determined by a coin-flipping process. 
Transparent pixels allow light to pass through, 
while opaque pixels block light. 
In this paper, we use $0$ to represent transparent pixels 
and $1$ to denote opaque pixels. 
Next, we present the definition of light transmission. 

\begin{definition}
  \cite{2007_shyu},\cite{1987_kafri_keren} (Light Transmission) 
  The light transmission is defined for both an individual pixel and the entire binary image as follows: 

  (1) For a pixel $s$ in binary image $S$, 
  its light transmission $t(s)$ is defined as the probability that the pixel value to be $0$,
  i.e., $t(s)=\Pr(s=0)$. 
  In particular, the light transmission of a transparent pixel is $1$,
  while that of an opaque pixel is $0$.

  (2) For a binary image $S$, the (average) light transmission $T(S)$ is calculated theoretically as follows: 
  \begin{equation}
    \label{eq:transmission}
    T(S) = \frac{1}{h\times w}\sum_{i=1}^{h}\sum_{j=1}^{w}t(S[i,j]),
  \end{equation}
  where $h$ and $w$ denote the height and width of $S$, respectively, 
  $S[i,j]$ represents the pixel value in the position $(i,j)$. 
\end{definition}
In practical experiments, since each pixel is definitively set to either $0$ or $1$, 
eliminating any randomness, the average light transmission of a binary image $S$ is typically calculated as follows: 
\begin{equation}
  T(S) = \frac{N_w}{N_S} = \frac{\sum_{i=1}^{h}\sum_{j=1}^{w}I_{\{0\}}(S[i,j])}{h \times w}, 
\end{equation}
where $N_S$ and $N_w$ denote the total number of pixels and the number of transparent pixels in $S$. 

Naturally, the average light transmission of a random grid is $\frac{1}{2}$, 
resulting in approximately equal numbers of transparent and opaque pixels. 

The following Lemma characterizes the light transmission of set 
$\mathcal{K}$ produced by Algorithm \ref{algorithm:k_k} 
under superimposition with different bit quantities. 
 
\begin{lemma}\cite{2011_chen_tsao}
  \label{lemma:light transmission}
  The light transmission resulting from the superimposition of $q~(q<k)$ and $k$ bits 
  generated from Algorithm \ref{algorithm:k_k} is provided as follows: 
  \begin{equation}
    \begin{aligned}\label{eq:k bits light transmission}
      &t(b'_q[s_{(0)}]) = t(b'_q[s_{(1)}]) = \frac{1}{2^q},\\
      &t(b'_k[s_{(0)}]) = \frac{1}{2^{k-1}}\neq t(b'_k[s_{(1)}]) = 0, 
    \end{aligned}
  \end{equation}
  where $b'_h[s_{(t)}]$ denotes 
  the stacking result of any selected $h$ bits from $\mathcal{K}$ produced by $s=t$ 
  for $h = 1,2,\ldots,k$; $t = 0,1$. 
\end{lemma}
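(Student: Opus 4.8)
The plan is to analyze Algorithm \ref{algorithm:k_k} directly and track the statistics of the bits in $\mathcal{K}$ as a function of the secret pixel value $s$. First I would recall that the $(k,k)$ RGVCS is the classical construction in which, for $s=0$, the $k$ bits $b_1,\dots,b_k$ are generated so that they are mutually independent fair coins (i.e., a uniformly random bit string), whereas for $s=1$ the bits are generated as a uniformly random string subject to the constraint that their OR equals $1$ — equivalently, $b_1,\dots,b_{k-1}$ are independent fair coins and $b_k$ is set to force $\otimes_{i=1}^k b_i = 1$. (If the paper's Algorithm \ref{algorithm:k_k} uses a different but equivalent randomization, I would first rewrite it in this canonical form; this is the one genuinely algorithm-dependent step.)

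Given that characterization, the two claims become elementary probability computations. For the $q<k$ case: any fixed subset of $q$ bits, when $s=0$, is a block of $q$ independent fair coins, so the OR of these $q$ bits is $0$ exactly when all $q$ are $0$, which happens with probability $2^{-q}$; hence $t(b'_q[s_{(0)}])=2^{-q}$. When $s=1$, I would argue by symmetry/exchangeability: since $q<k$, a subset of size $q$ never includes "all" the bits, and one checks that the marginal distribution of any $q$ of the $k$ bits in the $s=1$ case is \emph{also} that of $q$ independent fair coins — intuitively, conditioning on the global OR being $1$ imposes no constraint visible to a proper subset once $q\le k-1$. This gives $t(b'_q[s_{(1)}])=2^{-q}$ as well, establishing the first line of \eqref{eq:k bits light transmission}. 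For the $k$ case: stacking all $k$ bits, when $s=1$ the OR is $1$ by construction, so $t(b'_k[s_{(1)}])=0$; when $s=0$ the OR is $0$ iff all $k$ independent fair coins are $0$, with probability $2^{-k}$ — but here I must be careful, because the claim states $2^{-(k-1)}$, not $2^{-k}$. This discrepancy tells me the actual Algorithm \ref{algorithm:k_k} is \emph{not} "$k$ independent fair coins" for $s=0$; rather the standard Kafri–Keren-style $(k,k)$ scheme produces, for $s=0$, a string that is uniform on the set of strings whose OR equals $0$ \emph{or} whose OR equals $1$ in a balanced way — more precisely the construction makes $b_k$ a deterministic function (the XOR, or a complement) of $b_1,\dots,b_{k-1}$ in both cases, so that for $s=0$ the event "all $k$ bits are $0$" has probability $2^{-(k-1)}$. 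So the correct first step is to read off from Algorithm \ref{algorithm:k_k} that $b_1,\dots,b_{k-1}$ are independent fair coins and $b_k = b_1\oplus\cdots\oplus b_{k-1}$ when $s=0$ while $b_k = \overline{b_1\oplus\cdots\oplus b_{k-1}}$ (or the analogous forcing rule) when $s=1$; from there both the $q$-subset marginal claim and the $2^{-(k-1)}$ value follow by direct counting.

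The main obstacle, then, is purely bookkeeping about Algorithm \ref{algorithm:k_k}: I need the precise distributional description of $\mathcal{K}$ for $s=0$ and $s=1$, because the exact constants ($2^{-q}$ versus, say, something involving $k$, and $2^{-(k-1)}$ versus $2^{-k}$) hinge on whether the last bit is a free coin or a parity-determined bit. Once that description is fixed, I would (i) prove that every $q$-element subset of $\mathcal{K}$ has the uniform-fair-coins marginal distribution regardless of $s$ — this is an exchangeability plus "conditioning on a parity constraint is invisible to any proper subset" argument — which immediately yields $t(b'_q[s_{(0)}])=t(b'_q[s_{(1)}])=2^{-q}$; and (ii) evaluate the full-set OR in each case: probability $0$ for $s=1$ by design, and probability $2^{-(k-1)}$ for $s=0$ by counting the single all-zero string among the $2^{k-1}$ equally likely strings consistent with the parity rule. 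Combining (i) and (ii) gives exactly \eqref{eq:k bits light transmission}, completing the proof.
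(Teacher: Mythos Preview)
The paper does not actually prove this lemma; it is quoted from \cite{2011_chen_tsao} and stated without argument. Your final approach is correct and is the standard one: Algorithm~\ref{algorithm:k_k} explicitly sets $b_1,\dots,b_{k-1}$ to be independent fair coins and $b_k = s \oplus b_1 \oplus \cdots \oplus b_{k-1}$, so $(b_1,\dots,b_k)$ is uniform on the set of binary $k$-tuples of parity $s$. From there, (i) any proper $q$-subset ($q<k$) has the uniform marginal on $\{0,1\}^q$ because a single parity constraint on $k$ coordinates is invisible to any $k-1$ or fewer of them, which gives $t(b'_q[s_{(0)}])=t(b'_q[s_{(1)}])=2^{-q}$; and (ii) the OR of all $k$ bits is $0$ only for the all-zero string, which has even parity and hence lies in the support for $s=0$ (one string out of $2^{k-1}$, giving $2^{-(k-1)}$) but not for $s=1$ (giving $0$).

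The only weakness in your write-up is the long detour through an OR-forcing description of the algorithm before you arrive at the XOR construction. That detour is unnecessary: Algorithm~\ref{algorithm:k_k} is spelled out in the paper and is the XOR/parity construction from line one, so there is no ``genuinely algorithm-dependent step'' to resolve. You should drop the speculative first half and go straight to the parity description and the two computations above; what remains is a clean and complete proof.
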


\begin{algorithm}[t]  
  \caption{$(s,k,k)$ RGVCS}
  \label{algorithm:k_k}
  \SetKwData{In}{\textbf{in}}\SetKwData{To}{to}
  \DontPrintSemicolon
  \SetAlgoLined
  \KwIn {a secret bit $s$; a positive integer $k$,where $k\geq 2$}
  \KwOut {$b_1,b_2,\ldots,b_{k}$}
  \Begin{   
        $b_i = T_{\{0,1\}}$ for $1\leq i \leq k-1$\;
        $b_k = s \oplus b_1 \oplus b_2 \oplus \cdots \oplus b_{k-1}$\;    
      \Return{$b_1,b_2,\ldots,b_{k}$}  
  }
\end{algorithm}
\vspace{-8pt} 

\begin{algorithm}[t]  
  \caption{$(s,k,n)$ RGVCS}
  \label{algorithm:k_n_RGVCS}
  \SetKwData{In}{\textbf{in}}\SetKwData{To}{to}
  \DontPrintSemicolon
  \SetAlgoLined
  \KwIn {
    a secret bit $s$; \\
    threshold parameters $(k,n)$, where $2 \leq k \leq n$
  }
  \KwOut {
    $b_{\pi(1)},b_{\pi(2)},\ldots,b_{\pi(n)}$
  }
  \Begin{
      Generate the first $k$ bits:
        \text{\small{$execute$ $(s,k,k)~RGVCS$ $to~ obtain~\mathcal{K}$}}  \;

      Generate the remaining $n-k$ bits:
      \(
        \small
        \left\{
        \begin{array}{l}
          \text{\textbullet}~\text{Chen \& Tsao's scheme:} \\
          \quad b_j = T_{\{0,1\}} \quad (k < j \leq n) \\\\

          \text{\textbullet}~\text{Wu \& Sun's scheme:} \\
          \quad b_j = b_k \quad (k < j \leq n) \\\\

          \text{\textbullet}~\text{Yan's scheme:} \\
          \quad b_{j} = b_{j-k} \quad (k < j \leq n ) \\\\

          \text{\textbullet}~\text{Shyu's scheme:} \\
          {\scriptsize
            \quad b_j = 
            \begin{cases}
              b_{j-k}, &  k < j \leq \lfloor \frac{n}{k} \rfloor k \\
              T_{\mathcal{K}}, & j = \lfloor \frac{n}{k} \rfloor k + 1 \\
              T_{\mathcal{K}\setminus\{b_{i_1},b_{i_2},\ldots,b_{{i_{j-\lfloor \frac{n}{k} \rfloor k-1}}}\}}, & \lfloor \frac{n}{k} \rfloor k+1  < j \leq n\\             
            \end{cases}
          }\\
          \quad \text{where} ~ b_{i_t} = b_{\lfloor \frac{n}{k} \rfloor k + t},~  \scriptsize{1\leq t \leq j-1}\\
        \end{array}
        \right.
      \)\;
      $b_{\pi(1)},b_{\pi(2)},\ldots,b_{\pi(n)}$ = $\pi(b_1,b_2,\ldots,b_n)$\;
    \Return{$b_{\pi(1)},b_{\pi(2)},\ldots,b_{\pi(n)}$}  
    }
\end{algorithm}
\vspace{-10pt} 

\subsection{Contrast}
To quantify the difference between the recovered image and the secret image, 
contrast, represented as $\alpha$, is widely adopted as an evaluation metric in VCS. 
A higher contrast value indicates better visual quality of the recovered image, 
making it more discernible to the human eye.
The specific contrast formula for RGVCS is defined below. 
\begin{definition}
  \cite{2007_shyu}
  \label{definition: contrast of RGVCS}
  The contrast of the recovered image $R$ with respect to the secret image $S$ in RGVCS 
  is calculated as follows: 
  \begin{equation}
    \label{eq:RG contrast}
    \alpha = \frac{T(R(S[0]))-T(R(S[1]))}{1+T(R(S[1]))},
  \end{equation}
  where $R(S[0])$ and $R(S[1])$ represent the regions within $R$ 
  that correspond to the areas with pixel values of $0$ and $1$ in $S$, respectively.
\end{definition}

A valid $(k,n)$ RGVCS is defined based on contrast as follows.

\begin{definition}
  \cite{2007_shyu},\cite{2011_chen_tsao} 
  \label{definition:two conditions}
  A $(k,n)$ RGVCS is called valid if the following two conditions can be satisfied: 
  \begin{itemize}
    \item (Visually recognizable condition)
    The contrast of the recovered image obtained by stacking any $k$ shadow images is greater than $0$, i.e., $\alpha>0$, 
    which implies that the human visual system can recognize the recovered image.
    \item (Security condition)
    The contrast of the recovered image obtained by stacking any $q ~(q<k)$ shadow images is equal to $0$, i.e., $\alpha=0$,
    which indicates that the recovered image does not reveal any information about the secret image.
  \end{itemize}
\end{definition}

\section{$n^{\prime}$-grouped $(k,n)$ RGVCS}\label{section:grouped k_n'_n}
In this section, we propose a novel $(k,n)$ RGVCS under a new sharing paradigm.  
We first introduce the mechanism of the new sharing paradigm, then present an implementation scheme, 
followed by an analysis of its performance, 
including the contrast calculation method and the validity proof. 

\subsection{The new bit-level sharing paradigm for $n'$-grouped $(k,n)$-threshold}
Before the sharing phase begins, 
we first need to select the value of $n^{\prime}$ that satisfies $k\leq n^{\prime} \leq n$. 
In the new sharing paradigm, 
a secret pixel $s~(0~\text{or}~1)$ is encrypted into $n$ share bits $\{b_1,b_2,\ldots,b_n\}$, denoted as $\mathcal{B}_s$, 
which is composed of two parts: the first $n'$ bits and the remaining $n-n'$ bits. 
The first $n'$ bits form group $\mathcal{G}_1$, 
while the remaining $n-n'$ bits constitute group $\mathcal{G}_2,\mathcal{G}_3,\ldots,\mathcal{G}_{\lceil \frac{n}{n'} \rceil}$, 
where $| \mathcal{G}_i | = n'$ for $2 \leq i \leq {\lceil \frac{n}{n'} \rceil - 1}$ and $|\mathcal{G}_{\lceil \frac{n}{n'} \rceil}| = n - n'({\lceil \frac{n}{n'} \rceil - 1})$. 
In other words, 
$\mathcal{B}_s$ is divided into $\lceil \frac{n}{n'} \rceil$ groups: 
$\mathcal{G}_1,\mathcal{G}_2,\ldots,\mathcal{G}_{\lceil \frac{n}{n'} \rceil}$, where 
  \begin{align}
    \mathcal{G}_g =
      \begin{cases}
        \{b_{(g-1)n' + \delta}| 1 \leq \delta \leq n'\}, & 1 \leq g <  \lceil \frac{n}{n'} \rceil \\
        \{b_{(g-1)n' + 1},b_{(g-1)n' + 2},\ldots,b_n \}, & g = \lceil \frac{n}{n'} \rceil
      \end{cases}. \notag
  \end{align}
Particularly, we define a group of length $n'$ as a complete group, 
while a group with length less than $n'$ is considered an incomplete group.

We begin by generating the first $n'$ bits through $(s,k,n^{\prime})$ RGVCS, 
producing $\mathcal{G}_1 = \{b_{\pi(1)},b_{\pi(2)},\ldots,b_{\pi(n')}\}$. 
For the generation of the remaining $n-n'$ bits, i.e., $\mathcal{G}_g ~(2\leq g \leq \lceil \frac{n}{n'} \rceil)$, 
generate in the following manner: 
for $b_{(g-1)n'+1}$, randomly selected from $\mathcal{G}_1$, i.e., $b_{(g-1)n'+1} = T_{\mathcal{G}_1}$; 
for other bits in $\mathcal{G}_g$, also selected from $\mathcal{G}_1$, but exclude bits selected in previous positions of the same group, 
i.e., $b_{(g-1)n' + \delta} = T_{\mathcal{G}_1 \setminus \{b_{(g-1)n'+1},b_{(g-1)n'+2},\ldots,b_{(g-1)n'+\delta-1}\}}$ 
for $2 \leq \delta \leq \min (n',n-(g-1)n')$. 

Hence, 
each complete group constitutes a random permutation of $\{b_1,b_2,\ldots,b_{n'}\}$, 
while when $n$ is not divisible by $n'$, $\mathcal{G}_{\lceil \frac{n}{n'} \rceil}$ becomes a permuted subset of $\{b_1,b_2,\ldots,b_{n'}\}$. 

To better understand this new sharing paradigm, 
we provide the following example. 
\begin{example}
  Consider $k = 2, n' = 3$, and $n = 8$. 
  The $(s,2,3)$ RGVCS first generates the initial group 
  $\mathcal{G}_1=\{b_{\pi_0(1)},b_{\pi_0(2)},b_{\pi_0(3)}\}=\{b_2,b_1,b_3\}$, 
  where $\pi_0 =
  \left(
    \begin{array}{l}
      1~2~3\\
      2~1~3
    \end{array}
  \right)$.
  The remaining 5 bits are generated as follows:
  
  \noindent\textbf{$\mathcal{G}_2=\{b_4,b_5,b_6\}$}:
  \begin{itemize}
      \item $b_4$: Randomly select from $\{b_2,b_1,b_3\}$ (e.g., $b_1$);
      \item $b_5$: Select from remaining $\{b_2,b_3\}$ (e.g., $b_3$);
      \item $b_6$: Last remaining bit $b_2$.
  \end{itemize}
  
  \noindent\textbf{$\mathcal{G}_3=\{b_7,b_8\}$}:
  \begin{itemize}
      \item $b_7$: Randomly select from $\{b_2,b_1,b_3\}$ (e.g., $b_2$);
      \item $b_8$: Select from remaining $\{b_1,b_3\}$ (e.g., $b_1$).
  \end{itemize}
  
  Thus, $\mathcal{G}_2 = \{b_1,b_3,b_2\}$ and $\mathcal{G}_3 = \{b_2,b_1\}$, 
  $\mathcal{B}_s = \{b_2,b_1,b_3,b_1,b_3,b_2,b_2,b_1\}$. 
\end{example}
By applying the described sharing paradigm to encrypt every pixel in the secret image, 
we can construct a fundamentally new $(k,n)$ RGVCS, 
which will be presented in the following subsection. 

\vspace{-10pt} 
\subsection{The sharing and recovery phases} 
Building upon the new bit-level sharing paradigm, 
we have designed a novel $(k,n)$ RGVCS named $n'$-grouped $(k,n)$ RGVCS. 
The detailed sharing steps are presented in Algorithm \ref{algorithm:n' grouped k_n}, 
where $\mathcal{P}[i,j]$ is used to record the bit indices that have already been selected within the same group at position $(i,j)$ in shadow images.  
Note that $\mathcal{P}[i,j]$ must be reset to empty at the beginning of each new group.
A simple superposition is employed for the recovery phase. 

After executing Algorithm \ref{algorithm:n' grouped k_n}, 
$\mathcal{SC} = \{SC_1,SC_2,\ldots,SC_n\}$ can be obtained. 
Since the new sharing paradigm is applied to each pixel of the secret image, 
we can divide $\mathcal{SC}$ into $\lceil \frac{n}{n'} \rceil$ groups: 
$G_1,G_2,\ldots,G_{\lceil \frac{n}{n'} \rceil}$, where $G_g =$
  \begin{align}
      \begin{cases}
        \{SC_{(g-1)n' + \delta}| 1 \leq \delta \leq n'\},  &1 \leq  g <  \lceil \frac{n}{n'} \rceil \\
        \{SC_{(g-1)n' + 1},SC_{(g-1)n' + 2},\ldots,SC_n \},   &g = \lceil \frac{n}{n'} \rceil
      \end{cases}. \notag
  \end{align}


\begin{figure*}[t]
  \centering
  \captionsetup[subfloat]{font=normalsize} 
  \subfloat[The sharing phase]{
      \includegraphics[width=\textwidth]{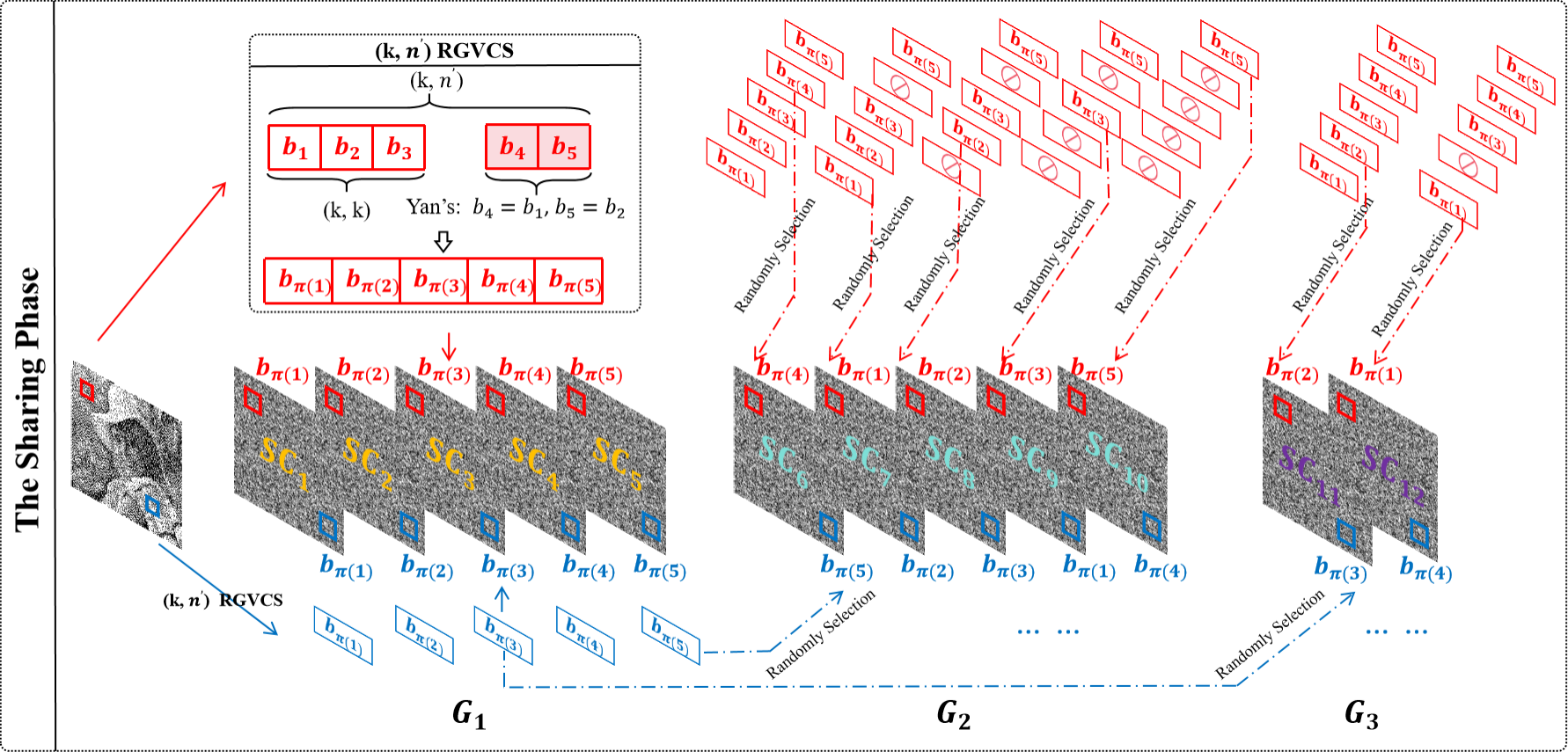}}
      \\
  \subfloat[The recovery phase ]{ 
      \includegraphics[width=\textwidth]{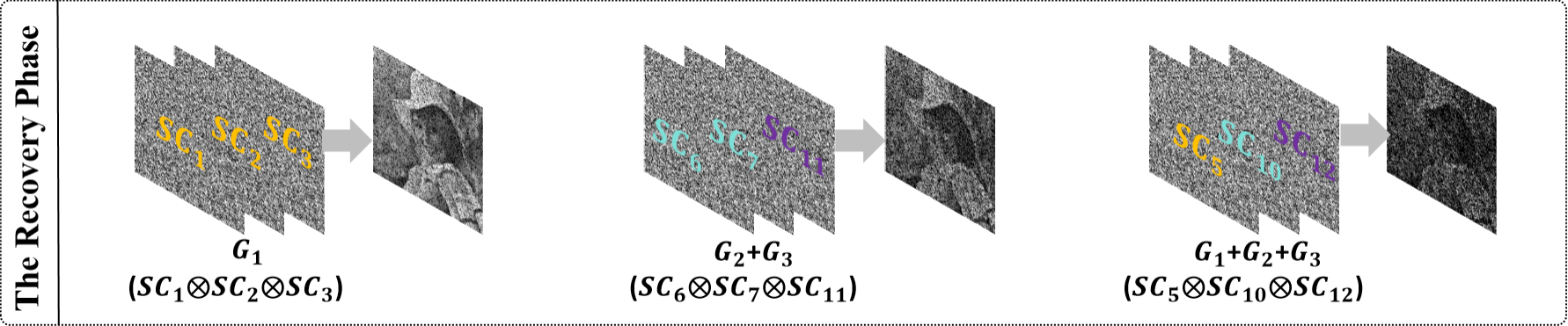}}
  \caption{The sharing phase and recovery phase of $5$-grouped $(3,12)$ RGVCS}
  \label{fig:sharing_recovery}
  \vspace{-10pt} 
\end{figure*}
Figure \ref{fig:sharing_recovery} illustrates the sharing and recovery phases of $5$-grouped $(3,12)$ RGVCS. 
During the sharing phase, the shadow images are divided into $\lceil \frac{12}{5} \rceil~(=3)$ groups: 
$G_1=\{SC_1,SC_2,\ldots,SC_5\}$, $G_2=\{SC_6,SC_7,\ldots,SC_{10}\}$ and $G_3=\{SC_{11},SC_{12}\}$.
Taking the red secret pixel at the upper-left corner $s_{red}$ as an example, 
it is first encrypted via Yan's $(s_{red},3,5)$ RGVCS to generate $\{b_{\pi(1)},b_{\pi(2)},\ldots,b_{\pi(5)}\}$ as $\mathcal{G}_1$, 
which are then distributed to the corresponding positions in $G_1$. 
For the corresponding positions in $G_2$ and $G_3$, 
they are randomly selected from the bit pool composed of $\mathcal{G}_1$, 
where the pool dynamically excludes bits previously selected within the same group.
For other pixel positions in the secret image (e.g., the blue pixel at the bottom-right corner), 
the encryption method remains consistent with $s_{red}$, 
but the bit selection from $\mathcal{G}_1$ is performed completely independently. 

During the recovery phase, as shown by (b) in Figure \ref{fig:sharing_recovery}, 
intra-group shadow superposition achieves optimal recovery quality, 
while inter-group superposition (across 2 groups) demonstrates intermediate effectiveness. 
Notably, hybrid combinations involving 3 groups yield the poorest reconstruction performance. 
Detailed contrast analysis is presented in the following subsection.

The most significant difference compared to traditional $(k,n)$ RGVCS model lies in 
whether the final random permutation operation (step 4 of Algorithm \ref{algorithm:k_n_RGVCS}) is required. 
In traditional model, taking Yan's RGVCS as an example, 
the encryption result for the secret pixel $s$ is 
$\{b_1,b_2,b_3,b_1,b_2\}$ (before random scrambling) under $(3,5)$-threshold. 
After performing identical encryption operations on each pixel of the secret image independently, 
certain shadow images may become completely identical (e.g., $SC_1$ and $SC_4$), 
resulting in failure to recover the secret image. 
Therefore, the traditional model requires a final random scrambling operation in bit-level encryption. 
In contrast, our scheme incorporates randomness during the encryption of each secret pixel, 
ensuring distinct encryption outcomes for different secret pixels. 
For instance, in 3-grouped $(3,5)$ RGVCS, 
two distinct secret pixels may yield $\{b_{\pi(4)},b_{\pi(1)},b_{\pi(2)},b_{\pi(3)},b_{\pi(5)}\}$ and 
$\{b_{\pi(5)},b_{\pi(2)},b_{\pi(3)},b_{\pi(1)},b_{\pi(4)}\}$ as their respective encryption results. 
Consequently, our approach eliminates the need for a final permutation operation at the bit-level encryption. 

Furthermore, the value of $n^{\prime}$ in our scheme is dynamically adjustable. 
When $n^{\prime}=n$, the scheme degenerates into a traditional $(k,n)$ RGVCS, 
and when $k\leq n^{\prime}< n$, 
the contrast of our scheme will exhibit varying degrees of change. 
Section \ref{section n'=k} explores the particular scenario when $n^{\prime}$ is taken to be $k$.

\section{Contrast analysis for $n'$-grouped $(k,n)$ RGVCS} \label{contrast}
This section begins by comparing the proposed scheme with traditional $(k,n)$ RGVCSs, 
followed by introducing definitions and notations to establish the theoretical framework. 
We then present the contrast calculation method of the proposed scheme 
and conclude with a validity analysis of the scheme. 

\vspace{-10pt} 
\subsection{Comparison with traditional $(k,n)$ RGVCSs}

The traditional $(k,n)$ RGVCS and our proposed $n'$-grouped $(k,n)$ RGVCS share the following common characteristic: 
each secret pixel is encrypted in the same method, 
which leads Eq.(\ref{eq:transmission}) to transform into

\begin{small}
  \begin{equation}
    T(S) = \frac{1}{h\times w}\sum_{i=1}^{h}\sum_{j=1}^{w}t(S[i,j]) = t(S[i,j]), \notag
  \end{equation}
\end{small}
where $i\in\{1, 2, \cdots, h\}$ and $j\in\{1, 2, \cdots, w\}$ are any given positions
Consequently, the contrast expression in Eq.(\ref{eq:RG contrast}) can be formulated as: 

\begin{small}
  \begin{equation}
    \label{eq:contrast_single_light_transmission}
    \alpha_{RG} = \frac{T(R(S[0]))-T(R(S[1]))}{1+T(R(S[1]))} = \frac{t(R_{s=0})-t(R_{s=1})}{1+t(R_{s=1})}, 
  \end{equation}
\end{small}
where $R_{s=0}$ and $R_{s=1}$ respectively denote the corresponding pixels in the recovered image $R$ for 
the points where the secret pixel values are 0 and 1. 
That is to say, the contrast calculation for both the traditional and our proposed schemes 
can be based on the single-point light transmission. 

For the differences between the two schemes, 
we demonstrate them through the experimental data statistics for the $5$-grouped $(3,12)$ RGVCS in Table \ref{tab:3_layers}.
When selecting 3 shadow images for recovery, 
there are $\binom{12}{3}$ $(=220)$ different combinations. 
By visualizing the contrast corresponding to all combinations, as illustrated in Figure \ref{fig:3_layers}, 
where the horizontal axis records the combinations of all shadow image indices, 
for example, $"2-6-10"$ represents the combination of the second, sixth, and tenth shadow images.  
We found that they fall into 3 layers. 
The number of combinations in each layer, the contrast range, and the mean values are shown in Table \ref{tab:3_layers}. 
In particular, we analyzed the group distribution of shadow image combinations in each layer: 
the first-layer combinations originate from a single group, categorized as $3=3$, 
the second-layer combinations involve two shadow images from one group and one from another, categorized as $3=2+1$, 
and the third-layer combinations are fully cross-group, with 3 shadow images drawn from distinct groups, categorized as $3=1+1+1$. 
Conversely, 
the traditional RGVCS yields a uniform contrast for all possible combinations, 
exhibiting no layered effect. 

\begin{algorithm}[t]
  \caption{$n^{\prime}$-grouped $(k,n)$ RGVCS}
  \label{algorithm:n' grouped k_n}
  \SetKwData{In}{\textbf{in}}\SetKwData{To}{to}
  \DontPrintSemicolon
  \SetAlgoLined
  \KwIn {an $h\times w$ binary secret image $S$;\\ threhold parameters $k,n',n$, where $2 \leq k\leq n' \leq n$}
  \KwOut {$SC_1,SC_2,\ldots,SC_{n}$}
  \Begin{
      \For{$(i,j)$ where $1\leq i \leq h ,1\leq j \leq w$}{
          Execute $(S[i,j],k,n^{\prime})$ RGVCS to obtain $b_{\pi(1)},b_{\pi(2)},\ldots,b_{\pi(n')}$ and distribute to 
          $SC_1[i,j],SC_2[i,j],\ldots,SC_{n^{\prime}}[i,j]$\;
          \For{$t \gets n^{\prime}+1$ \To $n$}{
         
            \If{$t\!\!\mod n'=1$}{
                $\mathcal{P}[i,j]=\emptyset$
            }
            $p = T_{\{1,2,\ldots,n'\}\setminus \mathcal{P}[i,j]}$\;
            $SC_t[i,j] = b_p$\;
            $\mathcal{P}[i,j].\mathrm{append }(p)\footnotemark$\;            
          } 
      }
      \Return{$SC_1,SC_2,\ldots,SC_{n}$}  
  }
\end{algorithm}
\footnotetext{
              $S.\mathrm{append}(x)$ represents the operation of adding $x$ to set $S$, 
              for example, if $S = \{1,4\}$ and $x=2$, after $S.\mathrm{append}(x)$, 
              $S$ becomes $\{1,2,4\}$.
              All $\mathrm{append}$ operations in this paper follow the same semantics.}

\begin{table}[!t] 
    \centering
    \caption{Experimental data for $5$-grouped $(3,12)$ RGVCS}
    \label{tab:3_layers}
    \begin{adjustbox}{width=0.6\columnwidth}
      {\fontsize{15}{12}\selectfont
      \begin{tabular}{clccclccc}
      \toprule
      Interval             &  & Count                & Contrast Range       & Mean                 &  & 3=3                & 3=2+1            & 3=1+1+1        \\ \midrule
      \multicolumn{1}{l}{} &  & \multicolumn{1}{l}{} & \multicolumn{1}{l}{} & \multicolumn{1}{l}{} &  & \multicolumn{1}{l}{} & \multicolumn{1}{l}{} & \multicolumn{1}{l}{} \\
      1                    &  & 20                   & 0.085502-0.088227    & 0.086508             &  & 20                   & 0                    & 0                    \\
      \multicolumn{1}{l}{} &  & \multicolumn{1}{l}{} & \multicolumn{1}{l}{} & \multicolumn{1}{l}{} &  & \multicolumn{1}{l}{} & \multicolumn{1}{l}{} & \multicolumn{1}{l}{} \\
      2                    &  & 150                  & 0.047080-0.051495    & 0.049068             &  & 0                    & 150                  & 0                    \\
      \multicolumn{1}{l}{} &  & \multicolumn{1}{l}{} & \multicolumn{1}{l}{} & \multicolumn{1}{l}{} &  & \multicolumn{1}{l}{} & \multicolumn{1}{l}{} & \multicolumn{1}{l}{} \\
      3                    &  & 50                   & 0.036622-0.039662    & 0.038210             &  & 0                    & 0                    & 50                   \\ \bottomrule
      \end{tabular}
      }
    \end{adjustbox}
    \vspace{-10pt} 
\end{table}

\begin{figure}[t]
  \centering
  \includegraphics[width=0.8\textwidth,keepaspectratio]{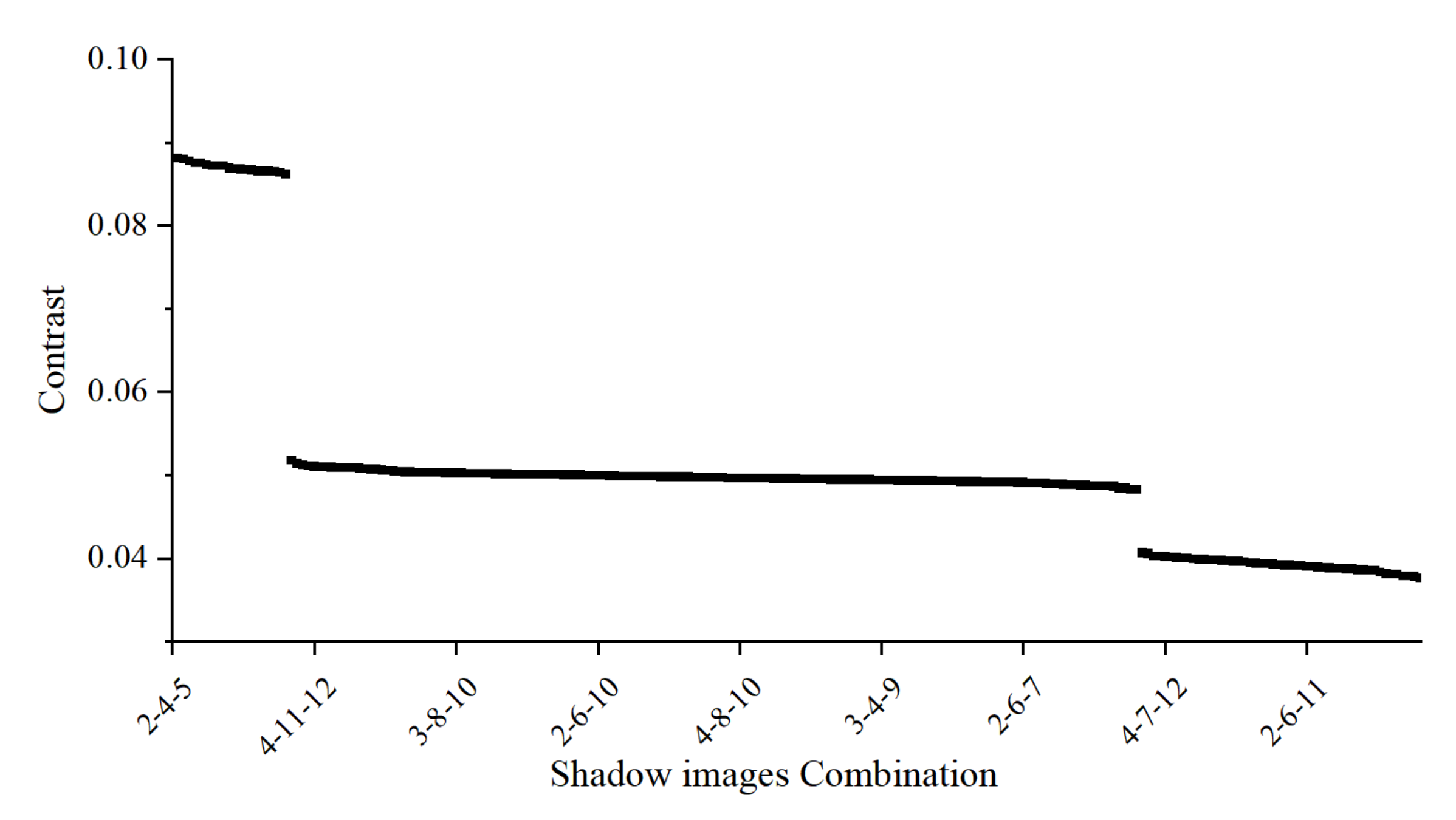}
  \caption{The contrast of all participant combinations in $5$-grouped $(3,12)$ RGVCS}
  \label{fig:3_layers}
  \vspace{-10pt} 
\end{figure}

\subsection{Definitions and Notations}
This subsection formally presents essential definitions and notations 
to establish the mathematical foundations for the subsequent theoretical derivations. 
It should be noted that the notations $k$, $n$, and $n'$ are fixed throughout the following sections, 
with definitions: 
$k$ and $n$ are threshold parameters, 
and $n'$ denotes the length of a share bit group $\mathcal{G}_i~(1\leq i \leq \lceil \frac{n}{n'} \rceil)$. 

\begin{definition}(Integer partition)\label{definition:integer partition}
  Let $t$ be a positive integer, 
  $\vec{\lambda} = (\lambda_1,\lambda_2,\ldots,\lambda_m)$ is called an integer partition of $t$ 
  if each $\lambda_i$ $(1\leq i \leq m)$ is a non-negative integer, 
  and satisfying $\sum_{i=1}^{m}\lambda_i = t$, 
  where $m$ is any positive integer. 
\end{definition}

In $n'$-grouped $(k,n)$ RGVCS, 
the numerical value of non-negative integers should be considered, 
which motivates the following definition of valid partition.

\begin{definition}(Valid partition)\label{definition:valid partition}
  Let $\vec{\lambda} = (\lambda_1,\lambda_2,\ldots,\lambda_{\lceil \frac{n}{n'} \rceil})$ be a partition of $t$, 
  $\vec{\lambda}$ is called valid if satisfies: 
  \begin{align}\label{eq:valid partition}
    \begin{cases}
    \max_{1\leq i \leq \lceil \frac{n}{n'} \rceil} \lambda_i \leq n'\\
    \lambda_{\lceil \frac{n}{n'} \rceil} \leq |\mathcal{G}_{\lceil \frac{n}{n'} \rceil}| 
  \end{cases}.
  \end{align}
\end{definition}

\begin{remark}
  The two conditions in Eq.(\ref{eq:valid partition}) are imposed because: 
  (i) at most $n'$ bits can be selected from each share bit group; 
  (ii) the group $\mathcal{G}_{\lceil \frac{n}{n'} \rceil}$ may be incomplete, 
  from which a maximum of $|\mathcal{G}_{\lceil \frac{n}{n'} \rceil}|$ bits can be selected. 
\end{remark}

To better understand Definition \ref{definition:valid partition}, an example is presented. 

\begin{example}\label{example:valid partitions}
  Consider the case where $n'=3,n=7$, and $t=3$. 
  $\mathcal{B}_s$ is consists of 3 groups: $\mathcal{G}_1,\mathcal{G}_2,\mathcal{G}_3$, 
  where $|\mathcal{G}_1| = |\mathcal{G}_2| = 3$, and $|\mathcal{G}_3| = 1$. 
  The valid partitions of $t$ are: $(3,0,0)$, $(0,3,0)$, $(0,2,1)$, $(2,0,1)$, $(2,1,0)$, $(1,2,0)$ and $(1,1,1)$. 
  While partitions $(0,0,3)$, $(0,1,2)$, and $(1,0,2)$ fail to satisfy the second condition in Eq.(\ref{eq:valid partition}), 
  they cannot be considered valid partitions. 
\end{example}

Then, we define the index multiset of the share bits selected based on a given valid partition. 

\begin{definition} (Index multiset) \label{definition: index set}
  Let $I \stackrel{\triangle}{=} \{g_1,g_2,\ldots,g_{\lceil \frac{n}{n'} \rceil}\}$ 
  be the index multiset corresponding to $\mathcal{B}_s$, 
  where each $g_i~(1\leq i \leq {\lceil \frac{n}{n'} \rceil})$ is a multiset containing 
  the indices of all bits belonging to $\mathcal{G}_i$. 
\end{definition}

\begin{example}\label{example:index multiset}
  Consider $\mathcal{B}_s$ concludes the following 3 share bit groups: 
  $\mathcal{G}_1 = \{b_1,b_2,b_3,b_1,b_2\}$, $\mathcal{G}_2 = \{b_3,b_1,b_2,b_2,b_1\}$, 
  and $\mathcal{G}_3 = \{b_3,b_1\}$. 
  Then, $g_1 = \{1,2,3,1,2\}$, $g_2 = \{3,1,2,2,1\}$ and $g_3 = \{3,1\}$. 
  Thus its corresponding index multiset is $I \stackrel{\triangle}{=} \{\{1,2,3,1,2\},\{1,2,3,1,2\},\{1,3\}\}$. 
\end{example}

Next, we present the definition of another multiset, 
which consists of $t$ bit indices selected from the multiset $I$ 
according to a given valid partition of $t$. 

\begin{definition} (Selected index multiset)
  Let $C(\vec{\lambda})$ denote the selected index multiset, 
  its selection is performed as follows: 
  \begin{equation}
    C(\vec{\lambda})  \triangleq \bigcup_{j = 1}^{\lceil \frac{n}{n'} \rceil} S[g_j, \lambda_j], \notag
  \end{equation}
  where $S[g_j, \lambda_j]$ represents the multiset formed by selecting 
  $\lambda_j$ $(1\leq j \leq \lceil \frac{n}{n'} \rceil)$ elements from $g_j$ simultaneously, 
  and 
  let $A_j$ denote the set corresponding to $S[g_j,\lambda_j]$, i.e.,
  \begin{equation}
    A_j  \triangleq \{x~ |~ x \in S[g_j,\lambda_j]\}. \notag
  \end{equation}
  Particularly, 
  let $\#C(\vec{\lambda})$ denote the number of distinct elements in $C(\vec{\lambda})$, i.e.,
  $\#C(\vec{\lambda}) =\big |\bigcup_{j = 1}^{\lceil \frac{n}{n'} \rceil} A_j \big | $.
\end{definition}

\begin{example} \label{example: selected index multiset}
  Given a valid partition $\vec{\lambda} = (2,2,1)$ 
  and the same $g_1,g_2,$ and $g_3$ as in Example \ref{example:index multiset}, 
  then the selected index multiset $C(\vec{\lambda})$ may fall into the following case: 
  $C(\vec{\lambda}) = \{1,1\}\cup \{1,2\} \cup \{3\} = \{1,1,1,2,3\}$. 
  Then, $A_1 =\{1\}$, $A_2 =\{1,2\}$, $A_3=\{3\}$, and 
  $\#C(\vec{\lambda}) = 3$. 
\end{example}
\begin{remark}
  If the selected bit indices are the same, 
  it means that the same bits have been selected, and their
  superimposition result is their bit value. 
  Consequently, the light transmission after bit superimposition is determined exclusively by the size of $\#C(\vec{\lambda})$. 
\end{remark}

\begin{definition}
  Let $t_i~(1\leq i \leq \lceil \frac{n}{n'} \rceil)$ be the number of new elements appearing in $A_i$, 
  defined by 
  \begin{equation} \label{eq11}
    t_{i} \triangleq \Big|\bigcup_{j=1}^{i}A_j\Big| - \Big|\bigcup_{j=1}^{i-1}A_j\Big|, 
  \end{equation}
  for every $i\in\{1,2,\cdots,\lceil \frac{n}{n'} \rceil\}$. In particular, $t_{1}= \left|A_1\right| $.
  Let 
  \begin{equation} 
    L_h \triangleq \Big\{(t_1,t_2,\ldots,t_{\lceil \frac{n}{n'} \rceil}) \Big| \sum_{i =1}^{\lceil \frac{n}{n'} \rceil} t_i = h\Big\} \notag
  \end{equation}
  denote the set of all possible $t_i$ sequences satisfying $\#C(\vec{\lambda})=\sum_{i =1}^{\lceil \frac{n}{n'} \rceil} t_i = h$, where $1\leq h \leq k$. 
\end{definition}


\subsection{Contrast calculation method}
The contrast calculation method for $n'$-grouped $(k,n)$ RGVCS is discussed in this subsection. 
We first analyze the different cases of bit selection from complete or incomplete bit groups. 

\begin{lemma}\label{lemma:complete_incomplete}
  Let $\mathcal{G} \triangleq\{b_1,b_2,\ldots,b_{n'}\}$ be a complete bit group, and let $\mathcal{G}' \triangleq \{b'_1,b'_2,\ldots,b'_l\}\subseteq \mathcal{G}$ be an incomplete bit group consisting of $l$ bits randomly selected from $\mathcal{G}$, where $1 \leq l < n'$. Let $m$ be a positive integer less than or equal to $l$.
  Let $A_1$ denote the event of randomly selecting $m$ elements from $\mathcal{G}$ that are exactly $b_{i_1},b_{i_2},\ldots,b_{i_m}$, and let $A_2$ denote the event of randomly selecting $m$ elements from $\mathcal{G}'$ that are $b_{i_1},b_{i_2},\ldots,b_{i_m}$. Then, $\Pr(A_1)=\Pr(A_2)$.
\end{lemma}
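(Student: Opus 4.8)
The plan is to reduce Lemma~\ref{lemma:complete_incomplete} to an elementary binomial identity and, as a cross‑check, to note that it also follows at once from the exchangeability of sampling without replacement.

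First I would pin down the randomness involved. The group $\mathcal{G}=\{b_1,\dots,b_{n'}\}$ consists of $n'$ \emph{distinct} bits, and, per the sharing paradigm, the $l$ bits of an incomplete group are drawn from $\mathcal{G}$ sequentially without replacement; hence $\mathcal{G}'$ is distributed uniformly over the $\binom{n'}{l}$ subsets of size $l$ of $\mathcal{G}$. A ``random selection of $m$ elements'' from a set likewise means a uniform choice among its $m$-subsets. With these conventions $\Pr(A_1)=1/\binom{n'}{m}$ is immediate, since the prescribed set $\{b_{i_1},\dots,b_{i_m}\}$ is one of the $\binom{n'}{m}$ equally likely $m$-subsets of $\mathcal{G}$.

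Next I would compute $\Pr(A_2)$ by conditioning on $\mathcal{G}'$. The event $A_2$ requires (i) $\{b_{i_1},\dots,b_{i_m}\}\subseteq\mathcal{G}'$, which has probability $\binom{n'-m}{l-m}/\binom{n'}{l}$ because the remaining $l-m$ members of $\mathcal{G}'$ must be chosen among the other $n'-m$ bits, and (ii) given (i), that the $m$ bits drawn from the $l$-element set $\mathcal{G}'$ are exactly those, with conditional probability $1/\binom{l}{m}$. Multiplying gives $\Pr(A_2)=\binom{n'-m}{l-m}\big/\!\big(\binom{n'}{l}\binom{l}{m}\big)$. Expanding factorials, $\binom{n'}{l}\binom{l}{m}=\frac{n'!}{(n'-l)!\,m!\,(l-m)!}$ and $\binom{n'-m}{l-m}=\frac{(n'-m)!}{(l-m)!\,(n'-l)!}$, so the ratio collapses to $\frac{(n'-m)!\,m!}{n'!}=1/\binom{n'}{m}=\Pr(A_1)$, which is the claim.

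I would also record the conceptual shortcut: drawing a uniform without‑replacement sample of size $l$ from $\mathcal{G}$ and then a uniform without‑replacement sample of size $m\le l$ from it is, by symmetry and exchangeability, the same as drawing a uniform without‑replacement sample of size $m$ directly from $\mathcal{G}$, so $A_1$ and $A_2$ are governed by the same distribution and in particular have equal probability. The only point needing care is the two‑stage randomness behind $A_2$ — making explicit that $\mathcal{G}'$ is itself random and uniform over $l$-subsets rather than a fixed set — together with the boundary case $m=l$, in which step (ii) is vacuous and the identity reads $1/\binom{n'}{l}=1/\binom{n'}{m}$.
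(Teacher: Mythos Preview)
Your proof is correct and follows essentially the same route as the paper: compute $\Pr(A_1)=1/\binom{n'}{m}$ directly, then condition on $\mathcal{G}'$ containing the target $m$-set to get $\Pr(A_2)=\binom{n'-m}{l-m}\big/\!\big(\binom{n'}{l}\binom{l}{m}\big)$ and simplify via factorials. Your added exchangeability remark and explicit articulation of the two-stage randomness are nice clarifications but not a different method.
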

\begin{proof}
  First, it is easy to obtain $\Pr(A_1)= \frac{1}{\binom{n'}{m}}$. 
  Second, for event $A_2$ to occur, $\mathcal{G}'$ must contain $b_{i_1},b_{i_2},\ldots,b_{i_m}$, 
  and then $b_{i_1},b_{i_2},\ldots,b_{i_m}$ are selected from $\mathcal{G}'$. Therefore, we obtain
 \begin{equation*}
 	\begin{aligned}
 		\Pr(A_2)  &=\frac{\binom{n'-m}{l-m}}{\binom{n'}{l}}\cdotp\frac{1}{\binom{l}{m}} = \dfrac{\frac{(n'-m)!}{(n'-l)!(l-m)!}}{\frac{n'!}{l!(n'-l)!}\cdotp\frac{l!}{m!(l-m)!}}  \\
 		    &= \frac{m!(n'-m)!}{n'!} =\frac{1}{\binom{n'}{m}} =\Pr(A_1).
 	\end{aligned}
 \end{equation*}
\end{proof}
Lemma~\ref{lemma:complete_incomplete} tells us that the probability of selecting a given m elements from the complete group and the incomplete group is the same. Furthermore, we obtain the following lemma.
\begin{lemma}\label{lemma:permuted partition}
  For a given valid partition $\vec{\lambda} = (\lambda_1,\lambda_2,\ldots,\lambda_{\lceil \frac{n}{n'} \rceil})$ 
  and a permutation $\varphi$, 
  let $\varphi(\vec{\lambda})\triangleq (\lambda_{\varphi(1)},\lambda_{\varphi(2)},\ldots,\lambda_{\varphi(\lceil \frac{n}{n'} \rceil)})$ 
  be a valid partition. 
  For an integer $h~(1\leq h \leq k)$, then
  \begin{equation*}
    \Pr(\#C(\vec{\lambda}) = h) = \Pr(\#C(\varphi(\vec{\lambda})) = h).  
  \end{equation*} 
\end{lemma}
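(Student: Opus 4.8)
The plan is to show that $\#C(\vec{\lambda})$ has the same law as the size of the union of $\lceil n/n' \rceil$ \emph{mutually independent} random subsets of $\{1,2,\ldots,n'\}$, where the $j$-th subset has a distribution depending only on $\lambda_j$. The statement then follows immediately, since the union operation is symmetric in its arguments, so replacing $\vec{\lambda}$ by $\varphi(\vec{\lambda})$ merely reorders an independent family of random sets and does not change the distribution of their union.

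First I would unpack how Algorithm~\ref{algorithm:n' grouped k_n} generates the groups for a single secret pixel. Let $M$ denote the (structurally fixed) multiset of bit symbols output by the underlying $(s,k,n')$ RGVCS. Because that RGVCS ends with a uniform permutation, the complete group $\mathcal{G}_1$ is a uniformly random arrangement of $M$; each later complete group $\mathcal{G}_g$ is then obtained by drawing the entries of $\mathcal{G}_1$ without replacement using \emph{fresh} randomness (the $T$-operations, with $\mathcal{P}$ reset at the start of each group). I would record the small but non-obvious fact underlying independence: if $\pi$ is a uniform permutation and $\sigma$ is an independent uniform permutation, then $\pi\circ\sigma$ is again uniform \emph{and} independent of $\pi$; iterating this shows that, across all complete groups, the induced arrangements of $M$ are mutually independent and each uniform. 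For the possibly incomplete group $\mathcal{G}_{\lceil n/n' \rceil}$, composing the two sampling-without-replacement stages and invoking Lemma~\ref{lemma:complete_incomplete} shows that the set of distinct symbols it contributes has exactly the same law as that obtained by drawing the same number of entries directly from a complete group.

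Now set $A_j$ to be the set of distinct bit indices among the $\lambda_j$ shares chosen from $\mathcal{G}_j$, so that $\#C(\vec{\lambda}) = \big|\bigcup_{j=1}^{\lceil n/n' \rceil} A_j\big|$. By the previous paragraph the $A_j$ are mutually independent, and the law of $A_j$ is a fixed distribution $\mu_{\lambda_j}$ on subsets of $\{1,\ldots,n'\}$ that depends only on $\lambda_j$: for a complete group this is immediate from the exchangeability of its $n'$ positions, and for the incomplete group it is precisely Lemma~\ref{lemma:complete_incomplete}. Note that both $\vec{\lambda}$ and $\varphi(\vec{\lambda})$ being valid (Definition~\ref{definition:valid partition}) ensures that whatever component occupies the incomplete slot never exceeds $|\mathcal{G}_{\lceil n/n' \rceil}|$, so all selections are well defined in both configurations. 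Hence $\#C(\vec{\lambda}) \stackrel{d}{=} \big|\bigcup_j B_j\big|$ with $B_j \sim \mu_{\lambda_j}$ independent, while $\#C(\varphi(\vec{\lambda})) \stackrel{d}{=} \big|\bigcup_j B'_j\big|$ with $B'_j \sim \mu_{\lambda_{\varphi(j)}}$ independent; since $\{B'_j\}_j$ is just a relabelling of the independent family $\{B_j\}_j$ and the union is commutative and associative, the two have the same distribution, giving $\Pr(\#C(\vec{\lambda}) = h) = \Pr(\#C(\varphi(\vec{\lambda})) = h)$ for every $h$.

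The step I expect to be the main obstacle is the independence claim in the second paragraph — in particular, convincing oneself that even though every later group is \emph{built from} $\mathcal{G}_1$, the resulting arrangements (and hence the $A_j$) are genuinely mutually independent, which needs the $\pi\circ\sigma$ observation above together with a careful check that Algorithm~\ref{algorithm:n' grouped k_n} uses independent randomness for each group and really resets $\mathcal{P}$ per group. Folding the incomplete group into the same picture via Lemma~\ref{lemma:complete_incomplete} is the second point requiring care; once both are settled, the symmetry argument concluding the proof is routine.
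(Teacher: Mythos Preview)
Your argument is correct, and it reaches the conclusion by a cleaner, more conceptual route than the paper's. You isolate the single structural fact that drives everything: the index arrangements of the groups are mutually independent and identically distributed (handled for the incomplete group via Lemma~\ref{lemma:complete_incomplete}), so the $A_j$ are independent with law $\mu_{\lambda_j}$ depending only on $\lambda_j$, and the symmetry of union finishes the job. The paper instead reduces to the all-complete case and then runs an explicit reordering argument: it defines permuted objects $\hat g_i,\hat\lambda_i,\hat A_i,\hat t_i$, proves three separate claims showing these coincide with their unpermuted counterparts, and matches the joint law of the ``new elements'' sequences $(X_i)$ and $(\hat X_i)$ via the chain rule before summing over $L_h$. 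Your approach is shorter and exposes why the result is inevitable; the paper's is heavier but stays closer to the $(t_1,\ldots,t_{\lceil n/n'\rceil})$ decomposition that it reuses downstream. One minor refinement: your phrase ``structurally fixed multiset $M$'' is literally true for Yan's or Wu--Sun's underlying $(k,n')$ RGVCS but not for Shyu's (where some of the later bits in $\mathcal G_1$ are assigned randomly from $\mathcal K$); conditioning on $M$ throughout fixes this without altering your argument. Also, your $\pi\circ\sigma$ independence observation is the right ingredient, though note that for more than two groups the ``iteration'' is really a single conditioning step: given $\pi$, the family $(\sigma_g\pi)_{g\ge 2}$ is i.i.d.\ uniform and independent of $\pi$.
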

\begin{proof}
	Since Lemma~\ref{lemma:complete_incomplete} holds and the selection from each bit group is completely independent, we can simplify the proof of Lemma 3 by considering only the case where each group is a complete bit group; that is, the case where $n\equiv 0\ (\text{mod\ } n')$.
	 
  We first introduce several definitions and state three claims. 
  Let $\hat{g}_i$ and $\hat{\lambda}_i~ (1 \leq i \leq \lceil \frac{n}{n'} \rceil)$ denote 
  the $i$-th selected index multiset after permutation and the corresponding number of selected elements, respectively. We define $\hat{g}_i \triangleq g_{\varphi^{-1}(i)}$ to determine the order after permutation.
  Similarly, we can define
  \begin{equation*}
  	\begin{aligned}
    C(\varphi(\vec{\lambda})) &\triangleq \bigcup_{j = 1}^{\lceil \frac{n}{n'} \rceil} S[\hat{g}_j, \hat{\lambda}_j], \footnotemark \\
     \hat{A}_j & \triangleq \{x~ |~ x \in S[\hat{g}_j,\hat{\lambda}_j]\}, \\
    \hat{L}_h &\triangleq \{(\hat{t}_{1}, \hat{t}_{2}, \ldots, \hat{t}_{\lceil \frac{n}{n'} \rceil}) | \sum_{i=1}^{\lceil \frac{n}{n'} \rceil}\hat{t}_{i}= h \}, 
    \end{aligned} 
    \footnotetext{Due to $\bigcup_{j = 1}^{\lceil \frac{n}{n'} \rceil} S[\hat{g}_j, \hat{\lambda}_j]= \bigcup_{j = 1}^{\lceil \frac{n}{n'} \rceil} S[g_j, \lambda_{\varphi(j)}]$, we define the order after the permutation, which does not affect the correctness of the proof.}
  \end{equation*} 
  where $\hat{t}_{i} \triangleq |\bigcup^{i}_{j=1}\hat{A}_{j}| - |\bigcup^{i-1}_{j=1}\hat{A}_{j}|$. We claim as follows.
  \begin{claim}\label{claim:1}
  	When $n\equiv 0\ (\text{mod\ } n')$, then
    $\hat{g}_i = g_{i}, ~\hat{\lambda}_i=\lambda_i$, for $1\leq i \leq \lceil \frac{n}{n'} \rceil$.
  \end{claim}
  \begin{claim}\label{claim:2}
  	When $n\equiv 0\ (\text{mod\ } n')$, then $L_h \!=\! \hat{L}_h$, for $1\leq h \leq k$.
  \end{claim}
  \begin{claim}\label{claim:3}
  	Let $X_i$ and $\hat{X}_i$ be random variables denoting the number of new elements appearing when selecting the $i$-th index set before and after permutation, respectively.
    For any given sequence $\{a_i\}_{i=1}^{\lceil \frac{n}{n'} \rceil}$, then
     \begin{equation*}
    	\begin{aligned}
   & \Pr(X_1 = a_1, X_2 = a_2, \ldots, X_{\lceil \frac{n}{n'} \rceil} = a_{\lceil \frac{n}{n'} \rceil}) \\
     = & \Pr(\hat{X}_1 = a_1, \hat{X}_2 = a_2, \ldots, \hat{X}_{\lceil \frac{n}{n'} \rceil} = a_{\lceil \frac{n}{n'} \rceil}),
     \end{aligned}
      \end{equation*} 
     where $1\leq a_i \leq k$ for every $1\leq i \leq \lceil \frac{n}{n'} \rceil$.
  \end{claim}
  \textbf{The proof of Claim \ref{claim:1}:} 
  Since all groups are complete bit groups, $g_i=g_j$ for every $1\leq i\leq j\leq\lceil \frac{n}{n'} \rceil$.
  Thus, $\hat{g}_i = g_{\varphi^{-1}(i)}=g_i$. In addition, based on the defined order $\hat{g}_i = g_{\varphi^{-1}(i)}$, it follows that $\hat{\lambda}_i = \lambda_{\varphi(\varphi^{-1}(i))} = \lambda_i$.

  \textbf{The proof of Claim \ref{claim:2}:} For any $\vec{t} = (t_1,t_2,\ldots,t_{\lceil \frac{n}{n'} \rceil}) \in L_h$, there exist $A_1,A_2,\ldots,A_{\lceil \frac{n}{n'}\rceil}$ satisfying Eq.~\eqref{eq11}.
   By Claim \ref{claim:1}, we know that $S[\hat{g}_i,\hat{\lambda}_i]=S[g_i,\lambda_i]$ can be obtained by selection. Therefore, due to the definition of $A_i$, we have
   \begin{equation*}
   	 A_i = \{x~ |~ x \in S[g_i,\lambda_i]\} = \{x~ |~ x \in S[\hat{g}_i,\hat{\lambda}_i]\}= \hat{A}_i   
   \end{equation*}
  for any $1\leq i\leq \lceil \frac{n}{n'}\rceil$. Furthermore, we obtain
  \[
   \hat{t}_{i} =\Big |\bigcup^{i}_{j=1}\hat{A}_{j}\Big| - |\bigcup^{i-1}_{j=1}\hat{A}_{j}|=\Big|\bigcup_{j=1}^{i}A_j\Big| - \Big|\bigcup_{j=1}^{i-1}A_j\Big|=t_i
  \]
  for any $1\leq i\leq \lceil \frac{n}{n'}\rceil$. Thus, $\vec{t} = (t_1,t_2,\ldots,t_{\lceil \frac{n}{n'} \rceil})=(\hat{t}_1,\hat{t}_2,\ldots,\hat{t}_{\lceil \frac{n}{n'} \rceil}) \in \hat{L}_h$, and hence $L_h \subseteq \hat{L}_h$.
  
  The converse holds similarly, yielding $\hat{L}_h \subseteq L_h$. Therefore, $L_h = \hat{L}_h$.

 \textbf{ The proof of Claim \ref{claim:3}:} Due to the definition of $X_i$ and $\hat{X}_i$ and Claim \ref{claim:1}, we obtain
  \begin{equation*}
  	\begin{aligned}
  		&\Pr\big(X_i = a_i \big|X_{i-1} = a_{i-1},X_{i-2} = a_{i-2}, \ldots, X_1 = a_1\big) \\
  	   =&\Pr\big(\hat{X}_i=a_i \big| \hat{X}_{i-1} = a_{i-1}, \hat{X}_{i-2} = a_{i-2},\ldots, \hat{X}_1 = a_1\big),
   \end{aligned}	
  \end{equation*}
  for any $1\leq i\leq \lceil \frac{n}{n'} \rceil$. From the chain rule of conditional probability, we obtain
  
   \begin{small}
   \begin{equation*}
  \begin{aligned}
     &\Pr(X_1 = a_1, X_2 = a_2, \ldots, X_{\lceil \frac{n}{n'} \rceil} = a_{\lceil \frac{n}{n'} \rceil}) \\
    =& \prod_{i=1}^{\lceil \frac{n}{n'} \rceil} \Pr\big(X_i = a_i \big|X_{i-1} = a_{i-1},X_{i-2} = a_{i-2}, \ldots, X_1 = a_1\big) \\
    =&\prod_{i=1}^{\lceil \frac{n}{n'} \rceil}\Pr\big(\hat{X}_i=a_i \big| \hat{X}_{i-1} = a_{i-1}, \hat{X}_{i-2} = a_{i-2},\ldots, \hat{X}_1 = a_1\big) \\
   = & \Pr(\hat{X}_1 = a_1, \hat{X}_2 = a_2, \ldots, \hat{X}_{\lceil \frac{n}{n'} \rceil} = a_{\lceil \frac{n}{n'} \rceil}). 
  \end{aligned}
  \end{equation*} 
\end{small}
  Finally, we return to the proof of Lemma \ref{lemma:permuted partition}. It can be obtained that 
   \begin{small}
    \begin{equation*}
    \begin{aligned}
      &\Pr(\#C(\vec{\lambda}) = h)  \\
    = &\sum _{(t_1,t_2,\ldots,t_{\lceil \frac{n}{n'} \rceil})\in L_h} \Pr(X_1 = t_1, X_2 = t_2, \ldots, X_{\lceil \frac{n}{n'} \rceil} = t_{\lceil \frac{n}{n'} \rceil})  \\
    \overset{(a)}{=} &\sum _{(t_1,t_2,\ldots,t_{\lceil \frac{n}{n'} \rceil})\in L_h} \Pr(\hat{X}_1 = t_1, \hat{X}_2 = t_2, \ldots, \hat{X}_{\lceil \frac{n}{n'} \rceil} = t_{\lceil \frac{n}{n'} \rceil}) \\
    \overset{(b)}{=}&\sum _{(\hat{t}_{1}, \hat{t}_{2}, \ldots, \hat{t}_{\lceil \frac{n}{n'} \rceil})\in \hat{L}_h}
     \Pr(\hat{X}_1 = \hat{t}_1, \hat{X}_2 = \hat{t}_2, \ldots, \hat{X}_{\lceil \frac{n}{n'} \rceil} = \hat{t}_{\lceil \frac{n}{n'} \rceil})  \\
    =&\Pr(\#C(\varphi(\vec{\lambda})) = h),
\end{aligned}
\end{equation*} 
\end{small}
 where $(a)$ and $(b)$ are derived from Claim \ref{claim:3} and Claim \ref{claim:2}, respectively. 
\end{proof}
By Lemma \ref{lemma:permuted partition}, for any valid partition of integer $t$, regardless of its internal permutation, the probability that the selected bits contain exactly $h$ $(1\leq h \leq k)$ distinct indices is determined. Thus, for convenience, 
we arrange the valid partition $\vec{\lambda}$ in descending order in subsequent discussions,
and the number of non-zero integers in $\vec{\lambda}$ is denoted by $\lceil \vec{\lambda} \rceil$. 

Next, we present an analysis of 
the light transmission and contrast resulting from the stacking of 
the selected share bits, as determined by $\vec{\lambda}$, 
and their corresponding shadow images. 

\begin{theorem}\label{theorem:t0t1_contrast}
  Suppose that $\lambda_i$ shadow images are selected from the $i$-th group to recover the secret image, where $1\leq i\leq \lceil \frac{n}{n'} \rceil$. Let $\vec{\lambda}=(\lambda_1,\lambda_2,\ldots,\lambda_{ \lceil \frac{n}{n'} \rceil})$ be a valid partition. Then, 
  \begin{enumerate}
  	\item the light transmission resulting from the superimposition of the bits indexed by $C(\vec{\lambda})$ is
  	  \begin{equation*}
  		\begin{aligned}
  			&t([b^0_{C(\vec{\lambda})}]_{\otimes}) = \frac{{\Pr}(\#C(\vec{\lambda}) = k)}{2^{k-1}} +
  			\sum_{g=1}^{k-1} \frac{{\Pr}(\#C(\vec{\lambda}) = g)}{2^g}, \\
  			&t([b^1_{C(\vec{\lambda})}]_{\otimes}) = \sum_{g=1}^{k-1} \frac{{\Pr}(\#C(\vec{\lambda}) = g)}{2^g} ,
  		\end{aligned}
  	\end{equation*}
  	 where $[b^s_{C(\vec{\lambda})}]_{\otimes}$ $(s = 0~ \text{or}~ 1)$ denotes the stacking result of the share bits corresponding to the indices in $C(\vec{\lambda})$;
  	\item the contrast of the recovered image is
  	  \begin{equation}\label{eq:single_layer_contrast}
  		\alpha 
  		= \frac{\frac{{\Pr}(\#C(\vec{\lambda}) = k)}{2^{k-1}}  }{1 + \sum_{g=1}^{k-1} \frac{{\Pr}(\#C(\vec{\lambda}) = g)}{2^g} }.  
  	\end{equation} 
  \end{enumerate}
\end{theorem}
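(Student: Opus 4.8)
The plan is to reduce the superimposition of the share bits indexed by $C(\vec{\lambda})$ to the superimposition of a small collection of the fundamental bits $\mathcal{K}=\{b_1,\dots,b_k\}$, and then apply Lemma~\ref{lemma:light transmission}. First I would note that, by construction of the paradigm, every bit placed in any group $\mathcal{G}_g$ is a copy of some $b_i\in\mathcal{K}$: the first group $\mathcal{G}_1$ is produced by $(s,k,n')$ RGVCS, whose $n'$ outputs consist of the $k$ fundamental bits of Algorithm~\ref{algorithm:k_k} together with bits obtained by copying or re-selecting among them, and each later group $\mathcal{G}_g$ with $g\ge 2$ is a permuted subset of $\mathcal{G}_1$. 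Because the logical $OR$ is idempotent, the stacking result $[b^s_{C(\vec{\lambda})}]_{\otimes}$ depends only on the set of distinct fundamental-bit indices appearing in $C(\vec{\lambda})$; that set has cardinality exactly $\#C(\vec{\lambda})$, which therefore lies in $\{1,\dots,k\}$, and $[b^s_{C(\vec{\lambda})}]_{\otimes}$ equals the superimposition of those $\#C(\vec{\lambda})$ distinct bits of $\mathcal{K}$. That $\Pr(\#C(\vec{\lambda})=h)$ is a well-defined quantity, insensitive to the internal order of $\vec{\lambda}$, is precisely the content of Lemma~\ref{lemma:permuted partition}, so writing it without fixing the order is legitimate.

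Next I would separate the two independent sources of randomness. The $(s,k,k)$ RGVCS of Algorithm~\ref{algorithm:k_k} fixes the \emph{values} of $b_1,\dots,b_k$, whereas the index draws $p=T_{\{1,\dots,n'\}\setminus\mathcal{P}}$ in Algorithm~\ref{algorithm:n' grouped k_n}, together with the choice of which $\lambda_i$ shadow images are taken from group $i$, determine the \emph{combinatorial pattern} and hence the value of $\#C(\vec{\lambda})$; these two are independent. Conditioning on the event $\#C(\vec{\lambda})=h$, the stacking result becomes the superimposition of some $h$ bits of $\mathcal{K}$ generated from the secret $s$, whose light transmission equals $t(b'_h[s_{(s)}])$ by Lemma~\ref{lemma:light transmission}; crucially, Lemma~\ref{lemma:light transmission} gives the \emph{same} value for every choice of $h$ of the $k$ bits, so it factors out of the sum over index-subsets. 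The law of total probability then gives
\[
t\big([b^s_{C(\vec{\lambda})}]_{\otimes}\big)=\sum_{h=1}^{k}\Pr\big(\#C(\vec{\lambda})=h\big)\,t\big(b'_h[s_{(s)}]\big).
\]

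Finally, I would substitute the explicit values from Lemma~\ref{lemma:light transmission}, namely $t(b'_h[s_{(0)}])=t(b'_h[s_{(1)}])=\frac{1}{2^h}$ for $h<k$, and $t(b'_k[s_{(0)}])=\frac{1}{2^{k-1}}$, $t(b'_k[s_{(1)}])=0$. Splitting the sum at $h=k$ yields the two displayed formulas for $t([b^0_{C(\vec{\lambda})}]_{\otimes})$ and $t([b^1_{C(\vec{\lambda})}]_{\otimes})$, proving part~1. For part~2, since every secret pixel is encrypted in the same way, these single-point transmittances may be fed into Eq.~(\ref{eq:contrast_single_light_transmission}) with $R_{s=0}\leftrightarrow[b^0_{C(\vec{\lambda})}]_{\otimes}$ and $R_{s=1}\leftrightarrow[b^1_{C(\vec{\lambda})}]_{\otimes}$; the common term $\sum_{g=1}^{k-1}\Pr(\#C(\vec{\lambda})=g)/2^g$ cancels in the numerator, leaving $\Pr(\#C(\vec{\lambda})=k)/2^{k-1}$ over $1+\sum_{g=1}^{k-1}\Pr(\#C(\vec{\lambda})=g)/2^g$, which is exactly Eq.~(\ref{eq:single_layer_contrast}). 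I expect the main obstacle to be the bookkeeping in the first two steps: making rigorous the independence of the value-randomness of the $(k,k)$ core from the pattern-randomness governing $\#C(\vec{\lambda})$, and checking that conditioning on $\#C(\vec{\lambda})=h$ really does collapse the superimposition to a stack of exactly $h$ distinct $\mathcal{K}$-bits with an index-free transmittance. After that, everything reduces to the algebraic cancellation above.
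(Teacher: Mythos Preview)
Your proposal is correct and follows essentially the same approach as the paper: condition on the random value of $\#C(\vec{\lambda})$, apply Lemma~\ref{lemma:light transmission} to evaluate $t(b'_h[s_{(s)}])$ in each case, sum via the law of total probability, and then feed the resulting transmittances into Eq.~(\ref{eq:contrast_single_light_transmission}). Your version is more explicit than the paper's in justifying why the conditioning works (idempotency of $\otimes$ and independence of the value-randomness in the $(k,k)$ core from the index-pattern randomness), but the logical skeleton is identical; the invocation of Lemma~\ref{lemma:permuted partition} is not actually needed here, since Theorem~\ref{theorem:t0t1_contrast} is stated for a fixed $\vec{\lambda}$.
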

\begin{proof}
	  \begin{enumerate}
 \item  From Lemma \ref{lemma:light transmission}, the value of $t([b^s_{C(\vec{\lambda})}]_{\otimes})$ is related to $\Pr(\#C(\vec{\lambda}) = k)$. The expected light transmission is: 
    \begin{equation*}
   	\begin{aligned}
  t([b^0_{C(\vec{\lambda})}]_{\otimes}) & = \sum_{g=1}^{k} {\Pr}(\#C(\vec{\lambda}) = g)\cdotp t(b'_g[s_{(0)}])  \\
                                 & =   \frac{{\Pr}(\#C(\vec{\lambda}) = k)}{2^{k-1}} +
                                 \sum_{g=1}^{k-1} \frac{{\Pr}(\#C(\vec{\lambda}) = g)}{2^g}, \\
             t([b^1_{C(\vec{\lambda})}]_{\otimes}) & = \sum_{g=1}^{k} {\Pr}(\#C(\vec{\lambda}) = g)\cdotp t(b'_g[s_{(1)}])  \\
           & =   \sum_{g=1}^{k-1} \frac{{\Pr}(\#C(\vec{\lambda}) = g)}{2^g}.\\                      
      	\end{aligned}
    \end{equation*}
\item 
 Due to Eq.~\eqref{eq:contrast_single_light_transmission}, we obtain
  \begin{small}
    \begin{equation*}
    \alpha = 
    \frac{t([b^0_{C(\vec{\lambda})}]_\otimes) - t([b^1_{C(\vec{\lambda})}]_\otimes)}{1 + t([b^1_{C(\vec{\lambda})}]_\otimes)}
    = \frac{\frac{{\Pr}(\#C(\vec{\lambda}) = k)}{2^{k-1}}  }{1 + \sum_{g=1}^{k-1} \frac{{\Pr}(\#C(\vec{\lambda}) = g)}{2^g} }.
    \end{equation*}
  \end{small}
 \end{enumerate}
\end{proof}

After calculating the contrast resulting from 
stacking the shadow images that correspond to the share bits selected by $\vec{\lambda}$, 
we can naturally proceed to compute the contrast achieved by 
stacking any $t~(1\leq t \leq n)$ shadow images. 
We first present an illustrative example, then formally define the contrast of 
$n'$-grouped $(k,n)$ RGVCS. 

\begin{example}
  Consider the case where $n' = 3, n = 7$, and $t = 3$.
  Due to Lemma \ref{lemma:permuted partition} and Eq.~\eqref{eq:single_layer_contrast}, 
  it can be seen that the contrast obtained by the valid partitions $(0, 2, 1)$, $(2, 0, 1)$, $(2, 1, 0)$, and $(1, 2, 0)$ is the same. 
  Therefore, we only need to consider the valid partition in descending order: $(3, 0, 0)$, $(2, 1, 0),$ and $(1, 1, 1)$. Let $\alpha_1$, $\alpha_2$, and $\alpha_3$ be the contrast resulting from stacking the shadow images that corresponding to  $(3, 0, 0)$, $(2, 1, 0),$ and $(1, 1, 1)$. Randomly select any $3$ of the $7$ shadow images. Through calculation, 
  it can be seen that the probabilities of contrast $\alpha_1$, $\alpha_2$, and $\alpha_3$ appearing are $\frac{2}{35}$, $\frac{24}{35}$, and $\frac{9}{35}$, respectively. Thus, 
  the contrast can be expressed as the expected contrast
  \begin{equation*}
    \alpha = \frac{2}{35}\times \alpha_1 + \frac{24}{35} \times \alpha_2 + \frac{9}{35} \times \alpha_3. 
  \end{equation*}
\end{example}

Due to Lemma \ref{lemma:permuted partition} and Theorem \ref{theorem:t0t1_contrast}, 
each valid partition, regardless of its internal permutation, corresponds to a unique contrast value. Thus, the expected contrast can be used to represent the contrast of the $n'$-grouped $(k,n)$ RGVCS. We formally define the contrast as follows.

\begin{definition}
  \label{definition:grouped contarst}
  Let $Z$ be the set of all valid partitions of $t~(1\leq t \leq n)$, 
  which can be expressed as: 
  \begin{equation*}
    Z \triangleq \bigsqcup_{\vec{\lambda}\in Z}[\vec{\lambda}], 
  \end{equation*}
  where each equivalence class $[\vec{\lambda}]$ is defined as: 
  \begin{equation*}
    [\vec{\lambda}] = \{\varphi(\vec{\lambda})\in Z ~|~ \varphi \mbox{ is a permutation}\}.
  \end{equation*} 
  Suppose $Z$ contains $g$ equivalence classes, where the contrast values of each equivalence class are sorted in descending order, and denote them as $\alpha_1,\alpha_2,\ldots,\alpha_g$. 
  Let $\beta_1,\beta_2,\ldots,\beta_g$ denote the probabilities of occurrence for the equivalence classes 
  corresponding to $\alpha_1,\alpha_2,\ldots,\alpha_g$ in $Z$. 
  Then, the contrast of the $n'$-grouped $(k,n)$ RGVCS, denoted as $\Gamma$, is defined as
 $\Gamma \triangleq \sum_{i=1}^{g}\beta_i \alpha_i=\beta_1 \alpha_1+\beta_2 \alpha_2+\cdots+\beta_{g} \alpha_{g}$.
 \end{definition}

\subsection{The validity of the proposed scheme}

At the end of this section, 
we demonstrate the validity of the $n^{\prime}$-grouped $(k,n)$ RGVCS. 
\begin{theorem} 
  \label{theorem:n' scheme validity}
  The $n^{\prime}$-grouped $(k,n)$ RGVCS is a valid scheme.
\end{theorem}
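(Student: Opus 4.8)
The plan is to reduce both requirements of Definition~\ref{definition:two conditions} to the single quantity $\Pr(\#C(\vec{\lambda})=k)$ appearing in Theorem~\ref{theorem:t0t1_contrast}. For any valid partition $\vec{\lambda}$ of $t$, the contrast $\alpha(\vec{\lambda})$ supplied by Theorem~\ref{theorem:t0t1_contrast} has numerator $\Pr(\#C(\vec{\lambda})=k)/2^{k-1}\ge 0$ and denominator $1+\sum_{g=1}^{k-1}\Pr(\#C(\vec{\lambda})=g)/2^g\ge 1$, so $\alpha(\vec{\lambda})\ge 0$, with $\alpha(\vec{\lambda})>0$ if and only if $\Pr(\#C(\vec{\lambda})=k)>0$. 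Since Definition~\ref{definition:grouped contarst} expresses the scheme's contrast as a convex combination $\Gamma=\sum_i\beta_i\alpha_i$ of these nonnegative values (each realizable partition class has positive weight and the weights sum to $1$), it follows that $\Gamma=0$ whenever every valid partition of $t$ has $\Pr(\#C(\vec{\lambda})=k)=0$, while $\Gamma>0$ whenever some valid partition of $t$ has both $\Pr(\#C(\vec{\lambda})=k)>0$ and positive weight $\beta$. So the whole proof comes down to two positivity/vanishing facts.

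For the security condition I would take $t=q<k$. Every valid partition $\vec{\lambda}$ of $q$ produces a selected index multiset $C(\vec{\lambda})$ of cardinality $\sum_j\lambda_j=q$, hence $\#C(\vec{\lambda})\le q<k$ with probability $1$; thus $\Pr(\#C(\vec{\lambda})=k)=0$ for every such $\vec{\lambda}$, and by the reduction above $\Gamma=0$. This handles the security requirement with essentially no computation.

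For the visually recognizable condition I would take $t=k$ and exhibit the valid partition $\vec{\lambda}_0=(k,0,\ldots,0)$ (valid since $|\mathcal{G}_1|=n'\ge k$). Stacking the $k$ shadow images indexed by $\vec{\lambda}_0$ — all drawn from the first group $G_1$ — is exactly the recovery of the underlying valid $(k,n')$ RGVCS on $k$ of its $n'$ shares; concretely, $\mathcal{G}_1$ is generated by $(s,k,n')$ RGVCS and therefore contains the permuted fundamental set $\mathcal{K}=\{b_1,\ldots,b_k\}$ of Algorithm~\ref{algorithm:k_k}, so with positive probability the $k$ chosen positions are precisely those carrying $\mathcal{K}$, and then by Lemma~\ref{lemma:light transmission} $t(b'_k[s_{(0)}])=1/2^{k-1}\neq 0=t(b'_k[s_{(1)}])$. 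Hence $\Pr(\#C(\vec{\lambda}_0)=k)>0$, so $\alpha(\vec{\lambda}_0)>0$; and the equivalence class $[\vec{\lambda}_0]$ (``all $k$ shares from one group'') occurs with probability at least $\binom{n'}{k}/\binom{n}{k}>0$, giving $\Gamma\ge\beta(\vec{\lambda}_0)\,\alpha(\vec{\lambda}_0)>0$. If a stronger conclusion is wanted, the same idea — partition $\mathcal{K}$ into blocks of sizes $\lambda_1,\ldots,\lambda_{\lceil n/n'\rceil}$ and use that each complete group, and the last group up to the size bound imposed in Definition~\ref{definition:valid partition}, contains the corresponding block, together with the independence of the per-group randomization — shows $\Pr(\#C(\vec{\lambda})=k)>0$ for every valid partition of $k$, so every choice of $k$ shadow images already yields strictly positive contrast.

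The reduction and the security direction are routine; the main obstacle is the recognizability direction, i.e. a clean justification that $\Pr(\#C(\vec{\lambda}_0)=k)>0$. This hinges on the structural fact that the first group is built from $(s,k,n')$ RGVCS and hence carries the fundamental set $\mathcal{K}$, and it requires care about what $\#C$ counts: since the contrast formula of Theorem~\ref{theorem:t0t1_contrast} is derived through Lemma~\ref{lemma:light transmission}, which presumes the stacked bits are all drawn from $\mathcal{K}$, one should record that the admissible underlying $(k,n')$ schemes (Wu \& Sun's, Yan's, Shyu's) generate the extra $n'-k$ bits as functions of $\mathcal{K}$, so that $\#C\le k$ holds throughout; once this is pinned down the positivity argument — and thus the validity of the $n'$-grouped $(k,n)$ RGVCS via Definition~\ref{definition:two conditions} — follows immediately.
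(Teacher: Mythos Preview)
Your proposal is correct and follows essentially the same route as the paper: both reduce the two conditions of Definition~\ref{definition:two conditions} to the vanishing or positivity of $\Pr(\#C(\vec{\lambda})=k)$ via Theorem~\ref{theorem:t0t1_contrast}, handling security from $\#C(\vec{\lambda})\le |C(\vec{\lambda})|=q<k$ and recognizability from the positive probability of covering all of~$\mathcal{K}$. One caveat on presentation: the paper argues $\Pr(\#C(\vec{\lambda})=k)>0$ for \emph{every} valid partition of $k$ (so that each $\alpha_i>0$), which is what the ``any $k$ shadow images'' clause of Definition~\ref{definition:two conditions} actually demands; your primary argument via the single witness $(k,0,\ldots,0)$ only delivers $\Gamma>0$, so the extension you sketch in the last two paragraphs is not optional but required---promote it from afterthought to the main recognizability argument.
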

\begin{proof}
  \begin{enumerate}
    \item We first prove the visually recognizable condition. 
  Consider selecting any $k$ shadow images for recovery. 
  For any valid partition of $k$, denoted as $\vec{\lambda}$, 
  the probability of occurrence for its corresponding equivalence class in $Z$ must be greater than 0, 
  which means $\beta_1,\beta_2,\ldots,\beta_g$ are greater than 0. 
  Additionally, it's easy to obtain that ${\Pr}(\#C(\vec{\lambda}) = k)>0$ by definition, 
  so the numerator in Eq.~\eqref{eq:single_layer_contrast} is greater than 0, 
  ensuring that the corresponding contrast is greater than 0. 
  That is to say, $\alpha_1,\alpha_2,\ldots,\alpha_g$ are greater than 0. 
  Therefore, $\Gamma= \sum_{i=1}^{g}\beta_i \alpha_i>0$. 
  \item  Next, we prove the security condition. 
  Consider selecting any $q~(1\leq q < k)$ shadow images for recovery. 
  Let $\vec{\zeta}$ be one of the valid partition of $q$. 
  From definition, we obtain ${\Pr}(\#C(\vec{\lambda}) = k)=0$.
  Thus, the numerator in Eq.~\eqref{eq:single_layer_contrast} is $0$ and the corresponding contrast is 0. Then, $\alpha_1,\alpha_2,\ldots,\alpha_g$ are all equal to 0. 
  Therefore, $\Gamma= 0$. 
  \end{enumerate}
\end{proof}
In summary, we propose a novel $(k,n)$ RGVCS based on the new bit-level sharing paradigm,  
along with a brand-new calculation method for contrast.   
We specifically discuss the case where $n'$ is set to $k$ in Section \ref{section n'=k}. 

\section{$k$-grouped $(k,n)$ RGVCS}{\label{section n'=k}}
Since only the $(k,k)$ RGVCS is currently known to 
achieve the upper bound of contrast\cite{14TIFS}, 
and existing $(k,n)$ RGVCSs mentioned in Section \ref{section:Preliminaries} 
are all extensions based on the $(k,k)$ RGVCS, 
in this section, we fix the adjustable value $n^{\prime}$ to $k$, 
propose a $k$-grouped $(k,n)$ RGVCS. 
We first describe the specific algorithm of the scheme 
and then analyze the performance. 

\begin{algorithm}[t]
  \caption{$k$-grouped $(k,n)$ RGVCS}
  \label{algorithm:grouped k_k_n}
  \SetKwData{In}{\textbf{in}}\SetKwData{To}{to}
  \DontPrintSemicolon
  \SetAlgoLined
  \KwIn {an $h\times w$ binary secret image $S$;\\threhold parameters $k,n$, where $k \leq n$}
  \KwOut {$SC_1,SC_2,\ldots,SC_{n}$}
  \Begin{
      \For{$(i,j)$ where $1\leq i \leq h ,1\leq j \leq w$}{
          Execute $(S[i,j],k,k)$ RGVCS to obtain $b_1,b_2,\ldots,b_{k}$ 
          and distribute to $SC_1[i,j],SC_2[i,j],\ldots,SC_{k}[i,j]$ \;
          \For{$t \gets k+1$ \To $n$}{
         
            \If{$t\;\mod\;k=1$}{
                $\mathcal{P}=\emptyset$
            }
            $p = T_{\{1,2,\ldots,k\}\setminus \mathcal{P}}$\;
            $SC_t[i,j] = b_p$\;
            $\mathcal{P}.append(p)$\;            
          } 
      }
      \Return{$SC_1,SC_2,\ldots,SC_{n}$}  
  }
\end{algorithm}

\subsection{The sharing and recovery phase}
Algorithm \ref{algorithm:grouped k_k_n} describes the specific steps of the sharing phase of the $k$-grouped $(k,n)$ RGVCS. 
By comparing with Algorithm \ref{algorithm:n' grouped k_n}, 
the most significant difference is the specific RGVCS executed in step $3$.
Here, the first share bits group $\mathcal{G}_1$ is generated by executing $(s,k,k)$ RGVCS. 
Similar to Algorithm \ref{algorithm:n' grouped k_n}, 
the share bits and shadow images are organized into groups of $k$.  
As for the recovery phase, the same method of superimposition is used. 

\subsection{Performance analysis}
This section primarily discusses the contrast of the $k$-grouped $(k,n)$ RGVCS.
Since this scheme is a special case of the $n^{\prime}$-grouped $(k,n)$ RGVCS, 
many symbolic representations and conclusions in Section \ref{section:grouped k_n'_n} can be directly applied. 
In $k$-grouped $(k,n)$ RGVCS, 
the calculation of contrast follows Definition~\ref{definition:grouped contarst},
with the difference lying in the specific values of $\alpha_i$ and $\beta_i$. 
In this subsection, 
we primarily analyze the computational methods for $\alpha_i$ and $\beta_i$ under $k$-grouped $(k,n)$ RGVCS. 


We first introduce the calculation of ${\Pr}(\#C(\vec{\lambda}) = x)~(1\leq x \leq k)$ in Eq.~\eqref{eq:single_layer_contrast}. 
We define the compliant matrix to simplify the subsequent exposition as follows. 

\begin{definition}(Compliant Matrix)
	\label{definition:compliant matrix}
	Let the number of distinct elements in $C(\vec{\lambda})$ represent as $x$, 
	and denote them as $n_1,n_2,\ldots,n_x$. 
	Consider the sequence $n_1,n_2,\ldots,n_x$ arranged as a matrix of size $\lceil \frac{n}{k} \rceil \times x$. 
	This matrix is called a compliant matrix if and only if it satisfies the following row and column constraints: 
	\begin{enumerate}
		\item each element of the matrix takes a value of $0$ or $1$;  
		\item the sum of the elements in each row corresponds to each integer in $\vec{\lambda}$, 
		and the sum of the elements in each column is at least $1$.
	\end{enumerate}
\end{definition} 

Let $\mathcal{E}_{x}^{\vec{\lambda}}(B)$ be the set of compliant matrices whose last row is fixed to $B$, where $B\in\{0,1\}^{1\times x}$ with Hamming weight $\lambda_{\lceil \frac{n}{k} \rceil}$, which can be expressed as

\begin{footnotesize}
	\begin{equation*}
		\label{eq:E}
		\begin{aligned}
			&\mathcal{E}_{x}^{\vec{\lambda}}(B) = \\
			\!\!\!\!&\left\{ E=\begin{pmatrix} A \\ B \end{pmatrix} \in \{0,1\}^{\lceil \frac{n}{k} \rceil \times x} \left|
			\begin{aligned}
				&\sum_{j=1}^{x} E(i, j) = \lambda_i, i \in \{1,2,\ldots,\lceil \frac{n}{k} \rceil\},  \\
				&\sum_{i=1}^{\lceil \frac{n}{k} \rceil} E(i, j)\geq 1, j \in \{1,2,\ldots,x\}  \\
			\end{aligned}
			\right.
			\right\}, 
		\end{aligned}
	\end{equation*}
\end{footnotesize}
where $E(i,j)$ denotes the element at the $i$-th row and $j$-th column of $E$.
Furthermore, we present the following example to clarify Definition \ref{definition:compliant matrix}. 

\begin{example}
	Consider the case when $k = 6$, $t = 6$, and $n = 14$, for a given valid partition $(3,2,1)$, 
	the following two examples $E_1$ and $E_2$ illustrate matrices that meet the constraints for $x=4$. 
	The selected index multiset 
	$C((3,2,1))$ corresponding to $E_1$ is $({n_1},{n_2},{n_3},{n_3},{n_4},{n_3})$,  
	and for $E_2$ it is $({n_1},{n_2},{n_3},{n_2},{n_4},{n_4})$.
	\[
	E_1 =
	\begin{pmatrix}
		1 & 1 & 1 & 0 \\
		0 & 0 & 1 & 1 \\
		0 & 0 & 1 & 0 \\
	\end{pmatrix},
	\quad
	E_2 = 
	\begin{pmatrix}
		1 & 1 & 1 & 0 \\
		0 & 1 & 0 & 1 \\
		0 & 0 & 0 & 1 \\
	\end{pmatrix}. 
	\]
\end{example}

\begin{algorithm}[t]
	\caption{Compliant matrix counting}
	\label{algorithm:compliant matrix counting}
	\SetKwData{In}{\textbf{in}}\SetKwData{To}{to}
	\DontPrintSemicolon
	\SetAlgoLined
	\KwIn {valid partition $\vec{\lambda}$, parameters $x$, $B$}
	\KwOut {$|\mathcal{E}^{\vec{\lambda}}_x(B)|$}
	\Begin{
		Initialize a $\lceil \frac{n}{k} \rceil \times x$ all-zeros matrix $\mathcal{M}_{\lceil \frac{n}{k} \rceil \times x}$\;
		$\mathcal{E}^{\vec{\lambda}}_x(B)$ = Backtrack($\mathcal{M}_{\lceil \frac{n}{k} \rceil \times x},\vec{\lambda},0,0,\emptyset,B$)\;
		
		\Return{$|\mathcal{E}^{\vec{\lambda}}_x(B)|$}  
	}
\end{algorithm}   
\begin{algorithm}[t]\caption{Backtrack($\mathcal{M}_{\lceil \frac{n}{k} \rceil \times x}, \vec{\lambda},r,c,R,B$) \protect\footnotemark}
	
	\label{algorithm:Backtracking function}
	\SetKwData{In}{\textbf{in}}\SetKwData{To}{to}
	\DontPrintSemicolon
	\SetAlgoLined
	\KwIn {a matrix to be filled $\mathcal{M}_{\lceil \frac{n}{k} \rceil \times x}$, $\vec{\lambda}$, current row $r$ and column $c$, solution set $R$, vector $B$}
	\KwOut {solution set $R$}
	\Begin{
		Set the last row of $\mathcal{M}$ according to vector $B$\;
		\If{$r\geq\lceil \frac{n}{k} \rceil$ }{  
			\If{the number of $1$s in all rows is equal to $\lambda_i$ and each column contains at least one $1$}{
				Add the current matrix $\mathcal{M}$ to $R$
			}
		}
		\If{$r=\lceil \frac{n}{k} \rceil-1$}{
			Backtrack($\mathcal{M}_{\lceil \frac{n}{k} \rceil \times x}, \vec{\lambda},r+1,0,R,B$)
		}
		\If{$c\geq x$}{ 
			\If{the number of $1$s in current row is equal to $\lambda_r$}{
				Backtrack($\mathcal{M}_{\lceil \frac{n}{k} \rceil \times x}, \vec{\lambda},r+1,0,R,B$) 
			}
		}
		\For{$v \in \{1,0\}$}{
			$\mathcal{M}[r][c]\gets v$\;
			\If{$v = 1$ and the number of 1s in current row is greater than $\lambda_r$}{
				continue
			}
			Backtrack($\mathcal{M}_{\lceil \frac{n}{k} \rceil \times x}, \vec{\lambda},r,c+1,R,B$)
		}

		
		\Return{$R$}  
	}
\end{algorithm}
\footnotetext{Algorithm \ref{algorithm:Backtracking function} is a backtracking algorithm.}

Next, we derive ${\Pr}(\#C(\vec{\lambda}) = x)$ through the count of compliant matrices, 
i.e., $|\mathcal{E}^{\vec{\lambda}}_x(B)|$, 
whose calculation is detailed in Algorithm \ref{algorithm:compliant matrix counting} and Algorithm \ref{algorithm:Backtracking function}.
\begin{lemma} \label{theorem:pr}
	Suppose that $\lambda_i$ shadow images are selected from the $i$-th group to recover the secret image, where $1\leq i\leq \lceil \frac{n}{n'} \rceil$. Let $\vec{\lambda}=(\lambda_1,\lambda_2,\ldots,\lambda_{ \lceil \frac{n}{k} \rceil})$ be a valid partition. Then,  
	\begin{equation}
		\label{eq:Pr2}
		{\Pr}(\#C(\vec{\lambda}) = x) = \frac{\binom{k-\lambda_{\lceil \frac{n}{k} \rceil}}{x - \lambda_{\lceil \frac{n}{k}\rceil}} |\mathcal{E}^{\vec{\lambda}}_x(B)|}
		{  \prod_{j=1}^{\lceil \frac{n}{k} \rceil -1}\binom{k}{\lambda_j}}. 
	\end{equation} 
\end{lemma}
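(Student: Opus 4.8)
The plan is to evaluate $\Pr(\#C(\vec{\lambda})=x)$ by an elementary counting argument, after first passing to a fully symmetric probabilistic model. Write $M=\lceil n/k\rceil$ (so $\lambda_M$ is the last entry of $\vec{\lambda}$), and for $1\le j\le M$ let $A_j\subseteq\{1,\ldots,k\}$ be the set of distinct base-bit indices obtained from the $\lambda_j$ shadow images selected in group $j$; then $|A_j|=\lambda_j$ (inside a group the draws from the bit pool are made without repetition) and $\#C(\vec{\lambda})=|A_1\cup\cdots\cup A_M|$, a quantity that does not involve the base-bit values or the secret pixel. The first step is to argue that for this probability we may treat $A_1,\ldots,A_M$ as \emph{independent, uniformly random} $\lambda_j$-subsets of $\{1,\ldots,k\}$. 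For a complete group this is immediate, since its $k$ base bits are placed into its $k$ shadow images by a uniformly random permutation. For the possibly incomplete last group $\mathcal{G}_M$ (if it happens to be complete the same permutation argument applies) this is precisely Lemma~\ref{lemma:complete_incomplete} with $m=\lambda_M$, which is legitimate since $\lambda_M\le|\mathcal{G}_M|$ for a valid partition: drawing $\lambda_M$ bits out of $\mathcal{G}_M$ has the same law as drawing $\lambda_M$ bits directly out of $\{b_1,\ldots,b_k\}$. For the first group, $\mathcal{G}_1$ is built deterministically ($SC_i\leftarrow b_i$), so a fixed choice of its shadow images gives a fixed $A_1$; but $\Pr(\#C(\vec{\lambda})=x)$ is invariant under relabelling $b_1,\ldots,b_k$, hence is the same for every $\lambda_1$-subset $A_1$, and in particular equals the value obtained when the choice inside $\mathcal{G}_1$ is also made uniformly at random. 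This places us in the symmetric model.

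Next I would single out group $M$. In the symmetric model, the same relabelling symmetry shows that $\Pr(\#C(\vec{\lambda})=x\mid A_M=T)$ does not depend on the particular $\lambda_M$-subset $T$, so it equals the unconditional probability; and since $A_M$ is independent of $(A_1,\ldots,A_{M-1})$, conditioning on $A_M=T$ leaves $A_1,\ldots,A_{M-1}$ independent and uniform, with $\prod_{j=1}^{M-1}\binom{k}{\lambda_j}$ equally likely values in total --- this is the denominator of Eq.~\eqref{eq:Pr2}. Fix such a $T$. Since $\#C(\vec{\lambda})=\big|T\cup\bigcup_{j<M}A_j\big|=\lambda_M+\big|\big(\bigcup_{j<M}A_j\big)\setminus T\big|$, the event $\#C(\vec{\lambda})=x$ occurs exactly when $D\triangleq\big(\bigcup_{j<M}A_j\big)\setminus T$ has cardinality $x-\lambda_M$. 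I count the favourable tuples $(A_1,\ldots,A_{M-1})$ in two stages. First choose the value $D$ of this set, a subset of $\{1,\ldots,k\}\setminus T$ of size $x-\lambda_M$, which contributes the factor $\binom{k-\lambda_M}{x-\lambda_M}$. Then, for each fixed $D$, count the tuples with $\bigcup_{j<M}A_j\subseteq T\cup D$ and $D\subseteq\bigcup_{j<M}A_j$: using the $x$ elements of $T\cup D$ as column labels, with the $\lambda_M$ elements of $T$ placed at the support of a fixed weight-$\lambda_M$ vector $B$, encode such a tuple by the $M\times x$ matrix whose $j$-th row ($j<M$) is the indicator of $A_j$ and whose last row is $B$. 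The row-sum conditions become $|A_j|=\lambda_j$; every column sum is $\ge 1$ automatically on the $T$-columns and, on the $D$-columns, precisely when $D\subseteq\bigcup_{j<M}A_j$; and $A_j\subseteq T\cup D$ is forced by confining the rows to these $x$ columns. This is a bijection onto the compliant matrices with last row $B$, so the count for a fixed $D$ is $|\mathcal{E}^{\vec{\lambda}}_x(B)|$, a number that depends on $B$ only through its weight $\lambda_M$ (permuting columns is a bijection $\mathcal{E}^{\vec{\lambda}}_x(B)\to\mathcal{E}^{\vec{\lambda}}_x(B')$ whenever $|B|=|B'|$). Dividing the favourable count $\binom{k-\lambda_M}{x-\lambda_M}\,|\mathcal{E}^{\vec{\lambda}}_x(B)|$ by $\prod_{j=1}^{M-1}\binom{k}{\lambda_j}$ gives Eq.~\eqref{eq:Pr2}.

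I expect the main obstacle to be the reduction in the first paragraph: fixing the precise model and justifying that $A_1,\ldots,A_M$ may be taken independent and uniform on $\lambda_j$-subsets of $\{1,\ldots,k\}$. The two delicate points are the possibly incomplete last group, for which Lemma~\ref{lemma:complete_incomplete} is indispensable, and the deterministic first group: although $A_1$ is not random for a fixed recovery set, $\Pr(\#C(\vec{\lambda})=x)$ is invariant under relabelling the base bits, so nothing is lost by averaging over the choice inside $\mathcal{G}_1$ --- and, crucially, the conditioning on $A_M=T$ must be carried out only \emph{after} this symmetrization, since for a still-fixed $A_1$ the conditional probability would genuinely depend on $|A_1\cap T|$. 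Once the symmetric model is in place the remaining work is the bijective count above; minor care is needed only for degenerate cases (some $\lambda_j=0$; or $x<\lambda_M$, where $\binom{k-\lambda_M}{x-\lambda_M}=0$, consistent with $\#C(\vec{\lambda})\ge\lambda_M$) and for verifying that the two inclusions $\bigcup_{j<M}A_j\subseteq T\cup D$ and $D\subseteq\bigcup_{j<M}A_j$ together force $\big(\bigcup_{j<M}A_j\big)\setminus T$ to equal $D$ exactly, not merely to contain it.
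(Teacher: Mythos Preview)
Your proposal is correct and follows essentially the same two-step counting as the paper: factor out the last group, choose the $x-\lambda_M$ ``new'' indices outside the last group's contribution (the $\binom{k-\lambda_M}{x-\lambda_M}$ factor), and biject the remaining choices with compliant matrices in $\mathcal{E}^{\vec{\lambda}}_x(B)$, over a sample space of size $\prod_{j<M}\binom{k}{\lambda_j}$. The paper's version keeps $\binom{|\mathcal{G}_M|}{\lambda_M}$ in both numerator and denominator and cancels it, whereas you condition on $A_M=T$ and invoke symmetry---cosmetically different but the same computation. Your reduction to the symmetric model (handling the deterministic first group via relabelling invariance and the incomplete last group via Lemma~\ref{lemma:complete_incomplete}) is more careful than the paper, which simply asserts the count without justifying the underlying uniform model; your explicit bijection and verification of the column-sum conditions are likewise absent from the paper's terse argument.
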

\begin{proof}
	Without considering the value of $\#C(\vec{\lambda})$, there are 
	$\prod_{j=1}^{\lceil \frac{n}{k} \rceil}\binom{|\mathcal{G}_{j}|}{\lambda_j}=\binom{|\mathcal{G}_{\lceil \frac{n}{k} \rceil}|}{\lambda_{\lceil \frac{n}{k} \rceil}}\prod_{j=1}^{\lceil \frac{n}{k} \rceil -1}\binom{k}{\lambda_j}$
	possible ways to select.
	Next, we consider how many selection ways there are when $\#C(\vec{\lambda})=x$.
	Consider this issue in two steps. 
	First, we select $\lambda_{\lceil \frac{n}{k} \rceil}$ distinct bit indices from $\mathcal{G}_{\lceil \frac{n}{k} \rceil}$ and determine $x$ distinct bit indices.
	There are $\binom{|\mathcal{G}_{\lceil \frac{n}{k} \rceil}|}{\lambda_{\lceil \frac{n}{k} \rceil}} \binom{k-\lambda_{\lceil \frac{n}{k} \rceil}}{x - \lambda_{\lceil \frac{n}{k}\rceil}}$ possible ways to select.
	Second, one way to select the last group of indices and $x$ distinct bit indices, corresponding to various selection ways for $\#C(\vec{\lambda})=x$, each of which corresponds to a matrix in $\mathcal{E}^{\vec{\lambda}}_x(B)$.
	Thus, there are $\binom{|\mathcal{G}_{\lceil \frac{n}{k} \rceil}|}{\lambda_{\lceil \frac{n}{k} \rceil}} \binom{k-\lambda_{\lceil \frac{n}{k} \rceil}}{x - \lambda_{\lceil \frac{n}{k}\rceil}}|\mathcal{E}^{\vec{\lambda}}_x(B)|$ 
	selection ways when $\#C(\vec{\lambda})=x$. Therefore, we obtain 
	\begin{small}
		\begin{equation*}
			\begin{aligned}
				{\Pr}(\#C(\vec{\lambda}) = x) &= \frac{\binom{|\mathcal{G}_{\lceil \frac{n}{k} \rceil}|}{\lambda_{\lceil \frac{n}{k} \rceil}} \binom{k-\lambda_{\lceil \frac{n}{k} \rceil}}{x - \lambda_{\lceil \frac{n}{k}\rceil}} |\mathcal{E}^{\vec{\lambda}}_x(B)|}
				{ \binom{|\mathcal{G}_{\lceil \frac{n}{k} \rceil}|}{\lambda_{\lceil \frac{n}{k} \rceil}} \prod_{j=1}^{\lceil \frac{n}{k} \rceil -1}\binom{k}{\lambda_j}}\\\\
				&=\frac{\binom{k-\lambda_{\lceil \frac{n}{k} \rceil}}{x - \lambda_{\lceil \frac{n}{k}\rceil}} |\mathcal{E}^{\vec{\lambda}}_x(B)|}
				{  \prod_{j=1}^{\lceil \frac{n}{k} \rceil -1}\binom{k}{\lambda_j}}.
			\end{aligned}
		\end{equation*}
	\end{small}
\end{proof}

Then, we describe the calculation of $\beta_i$. 

\begin{lemma}
	\label{theorem:betai}
	Suppose that $\lambda_i$ shadow images are selected from the $i$-th group to recover the secret image, 
	where $1\leq i\leq \lceil \frac{n}{k} \rceil$. 
	Let $t\triangleq \sum_{i=1}^{\lceil\frac{n}{k}\rceil} \lambda_i$. Let $\vec{\lambda}=(\lambda_1,\lambda_2,\ldots,\lambda_{ \lceil \frac{n}{k} \rceil})$ 
	be a valid partition, which can be expressed in the following form:
	$$\vec{\lambda} = \{{l_1}^{\!\!d_1},{l_2}^{\!\!d_2},\ldots,{l_r}^{\!\!d_r}\},$$
	where $l_1,l_2,\ldots,l_r$ denote $r$ distinct integers among the components of
	$\vec{\lambda}$, and $d_j~(1\leq j \leq r)$ denotes the number of occurrences of $l_j$. 
	Let $\beta_{\vec{\lambda}}$ denote the occurrence probability of the equivalence class $[\vec{\lambda}]$ in $Z$.\footnote{The symbols are as defined in Definition~\ref{definition:grouped contarst}.}
	Then, 
	\begin{small}
	\begin{equation}
		\label{eq:beta}
		\beta_{\vec{\lambda}} = \frac{ \sum_{g=1}^{r} \left[\binom{|\mathcal{G}_{\lceil \frac{n}{k} \rceil}|}{l_g} \prod_{j=1}^{\lceil \frac{n}{k} \rceil-1}\binom{k}{\lambda_j} \frac{(\lceil \frac{n}{k} \rceil-1)!}{d_1!\cdots d_{g-1}!(d_g-1)!d_{g+1}!\cdots d_r!}\right]}{\binom{n}{t}}.
	\end{equation}
	\end{small}
\end{lemma}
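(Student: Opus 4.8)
The plan is to compute $\beta_{\vec{\lambda}}$ directly as a ratio of favorable selections to total selections. The denominator is clear: choosing any $t$ shadow images out of $n$ gives $\binom{n}{t}$ equally likely outcomes, and each such choice of shadow images induces exactly one partition $\vec{\mu}$ recording how many are taken from each group $G_i$. So $\beta_{\vec{\lambda}}$ equals the number of $t$-subsets of shadow images whose induced partition lies in the equivalence class $[\vec{\lambda}]$, divided by $\binom{n}{t}$. The core task is therefore to count, over all permutations $\varphi$ giving distinct partitions $\varphi(\vec{\lambda})$, the product $\prod_{i}\binom{|G_i|}{\varphi(\vec{\lambda})_i}$ of ways to pick the prescribed number of shadow images from each group.

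The key structural observation is that all groups $G_1,\ldots,G_{\lceil n/k\rceil-1}$ are complete (size $k$), while only the last group $G_{\lceil n/k\rceil}$ may be incomplete (size $|\mathcal{G}_{\lceil n/k\rceil}|$). So among the $\lceil n/k\rceil$ positions of a partition in $[\vec{\lambda}]$, exactly one position — the last one — is "special" because its binomial coefficient uses $|\mathcal{G}_{\lceil n/k\rceil}|$ rather than $k$. First I would fix which distinct value $l_g$ (for $g=1,\ldots,r$) occupies the last slot; there are $d_g$ copies of $l_g$ available. Having placed $l_g$ in the last slot, the remaining $\lceil n/k\rceil-1$ slots must be filled by an arrangement of the multiset $\vec{\lambda}$ with one copy of $l_g$ removed; the number of such distinct arrangements is the multinomial coefficient $\frac{(\lceil n/k\rceil-1)!}{d_1!\cdots(d_g-1)!\cdots d_r!}$. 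For each such arrangement the count of shadow-image selections is $\binom{|\mathcal{G}_{\lceil n/k\rceil}|}{l_g}\prod_{j=1}^{\lceil n/k\rceil-1}\binom{k}{\lambda_j}$ — and crucially this product over the first $\lceil n/k\rceil-1$ complete groups does not depend on which arrangement we chose, since each $\binom{k}{\cdot}$ factor appears with the same multiplicity regardless of order. Summing over $g$ from $1$ to $r$ gives the numerator of Eq.~\eqref{eq:beta}, and dividing by $\binom{n}{t}$ finishes the proof.

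The one point demanding care is the claim that distinct arrangements of the partition correspond to distinct $t$-subsets and that we are not over- or under-counting. Here I would invoke Lemma~\ref{lemma:permuted partition} (and the surrounding discussion) to justify that it suffices to count ordered partitions $\varphi(\vec{\lambda})$ in $Z$: the equivalence class $[\vec{\lambda}]$ consists precisely of the distinct permutations of $\vec{\lambda}$ that are valid partitions. A subtlety is whether every permutation of $\vec{\lambda}$ remains valid — the validity condition requires $\lambda_{\lceil n/k\rceil}\le|\mathcal{G}_{\lceil n/k\rceil}|$, so a permutation placing a large component in the last slot could be invalid. The formula sidesteps this by noting $\binom{|\mathcal{G}_{\lceil n/k\rceil}|}{l_g}=0$ whenever $l_g>|\mathcal{G}_{\lceil n/k\rceil}|$, so invalid arrangements contribute zero and can be harmlessly included in the sum; I would state this explicitly. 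Finally, the $\prod_{j=1}^{\lceil n/k\rceil-1}\binom{k}{\lambda_j}$ factor is written in terms of the descending-order representative $\vec{\lambda}$, which is legitimate precisely because, as noted, it is permutation-invariant.

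\emph{Main obstacle.} The genuinely delicate step is the bookkeeping of the multinomial factor when the value removed from the last slot has multiplicity one: the denominator term $(d_g-1)!$ must be handled as a formal "remove one occurrence of $l_g$" operation, and one must check the edge case $d_g=1$ (where $(d_g-1)!=0!=1$ and $l_g$ disappears from the arrangement of the remaining $\lceil n/k\rceil-1$ slots). Verifying that this single clean formula correctly enumerates every distinct ordered valid partition in $[\vec{\lambda}]$ exactly once — with the zero-binomial convention absorbing invalid placements — is where the proof's real content lies; the rest is routine counting.
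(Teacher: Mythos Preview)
Your proposal is correct and follows essentially the same approach as the paper: condition on which distinct value $l_g$ occupies the last (possibly incomplete) group, count the multinomial arrangements of the remaining multiset in the first $\lceil n/k\rceil-1$ complete slots, multiply by the product of per-group binomials, sum over $g$, and divide by $\binom{n}{t}$. You are in fact more careful than the paper's own proof in two places: you explicitly note that the zero-binomial convention absorbs permutations that would violate the validity constraint on the last slot (the paper defers this to a remark after the proof), and you flag the $d_g=1$ edge case and the permutation-invariance of the product $\prod_{j=1}^{\lceil n/k\rceil-1}\binom{k}{\lambda_j}$, neither of which the paper spells out.
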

\begin{proof}
	There are $r$ possible values of $\lambda_{ \lceil \frac{n}{k} \rceil}:l_1,l_2,\ldots,l_r$.
	When $l_g$ $(1\leq g \leq r)$ shadow images is chosen from $\mathcal{G}_{\lceil \frac{n}{k} \rceil}$, 
	and the remaining $\lceil \frac{n}{k} \rceil-1$ groups of shadow images are all selected from the previous groups with size of $k$. 
	Thus, the number of selection ways is 
	$\prod_{j=1}^{\lceil \frac{n}{k} \rceil}\binom{|\mathcal{G}_{j}|}{\lambda_j} =\binom{|\mathcal{G}_{\lceil \frac{n}{k} \rceil}|}{l_g} \prod_{j=1}^{\lceil \frac{n}{k} \rceil-1}\binom{k}{\lambda_j}$. 
	There are $Y$ valid partitions in equivalence class $[\vec{\lambda}]$ that are equivalent to $\vec{\lambda}$, where $Y$ is equal to the number of permutations $\frac{(\lceil \frac{n}{k} \rceil-1)!}{d_1!\cdots d_{g-1}!(d_g-1)!d_{g+1}!\cdots d_r!}$ of the multiset $\{{l_1}^{\!\!d_1},\ldots,{l_{g-1}}^{\!\!d_{g-1}},{l_g}^{\!\!d_g-1},{l_{g+1}}^{\!\!d_{g+1}},\ldots,{l_r}^{\!\!d_r}\}$. Additionally, there are $\binom{n}{t}$ ways to select $t$ shadow images from $n$. Thus, Eq.~\eqref{eq:beta} holds.
\end{proof}

\begin{remark}
  When $|\mathcal{G}_{\lceil \frac{n}{k} \rceil}| < l_g$, 
  the value of $\binom{|\mathcal{G}_{\lceil \frac{n}{k} \rceil}|}{l_g}$ is 0. For instance, 
  we cannot select 4 shadow images from a set containing only 3 shadow images. 
\end{remark}

So far, we have calculated all the parameters related to the contrast of the ${k}$-grouped $(k,n)$ RGVCS. 
To enhance clarity for readers, 
we provide a detailed computation of contrast for $t=3$ under $3$-grouped $(3,7)$-threshold. 

\begin{example}
	The valid partitions of 3 are: $(3,0,0)$, $(2,1,0)$, and $(1,1,1)$. 
	Their compliant matrices are shown below: 
	\begin{footnotesize}
		\[
		\begin{aligned}
			&\mathcal{E}_3^{(3,0,0)}\!(B_1) \!= \!
			\left\{\!\!
			\begin{pmatrix}
				1 & 1 & 1 \\
				0 & 0 & 0 \\
				0 & 0 & 0 \\
			\end{pmatrix}\!\!
			\right\}\!; \notag
			\mathcal{E}_2^{(2,1,0)}\!(B_2) \!= \!
			\left\{\!\!
			\begin{pmatrix}
				1 & 1 \\
				1 & 0 \\
				0 & 0 \\
			\end{pmatrix},
			\begin{pmatrix}
				1 & 1 \\
				0 & 1 \\
				0 & 0 \\
			\end{pmatrix}\!\!
			\right\}, \notag\\
			&\mathcal{E}_3^{(2,1,0)}(B_1) \!= \!
			\left\{\!\!
			\begin{pmatrix}
				1 & 1 & 0\\
				0 & 0 & 1\\
				0 & 0 & 0\\
			\end{pmatrix},
			\begin{pmatrix}
				1 & 0 & 1\\
				0 & 1 & 0\\
				0 & 0 & 0\\
			\end{pmatrix},
			\begin{pmatrix}
				0 & 1 & 1\\
				1 & 0 & 0\\
				0 & 0 & 0\\
			\end{pmatrix}\!\!
			\right\}; \notag\\
			&\mathcal{E}_1^{(1,1,1)}\!(B_3) \!= \!
			\left\{\!\!
			\begin{pmatrix}
				1 \\
				1 \\
				1 \\
			\end{pmatrix}\!\!
			\right\}\!, \!
			\mathcal{E}_2^{(1,1,1)}\!(B_4) \!= \!
			\left\{\!\!
			\begin{pmatrix}
				1 & 0\\
				0 & 1\\
				1 & 0\\
			\end{pmatrix}\!,\!
			\begin{pmatrix}
				0 & 1\\
				1 & 0\\
				1 & 0\\
			\end{pmatrix}\!,\!
			\begin{pmatrix}
				0 & 1\\
				0 & 1\\
				1 & 0\\
			\end{pmatrix}\!\!
			\right\}\!, \!\notag\\
			&\mathcal{E}_3^{(1,1,1)}(B_5) \!= \!
			\left\{\!\!
			\begin{pmatrix}
				0 & 1 & 0\\
				0 & 0 & 1\\
				1 & 0 & 0\\
			\end{pmatrix},
			\begin{pmatrix}
				0 & 0 & 1\\
				0 & 1 & 0\\
				1 & 0 & 0\\
			\end{pmatrix}\!\!
			\right\},\notag
		\end{aligned}
		\]
	\end{footnotesize}
	where $B_1=(0,0,0)$, $B_2=(0,0)$, $B_3=(1)$, $B_4=(1,0)$ and $B_5=(1,0,0)$.
	Thus, it can be calculated from Eq.(\ref{eq:Pr2}) that:  
	$\Pr(\#C((3,0,0))=3) = 1$; 
	$\Pr(\#C((2,1,0))=2) = \frac{2}{3}$, $\Pr(\#C((2,1,0))=3) = \frac{1}{3}$; 
	$\Pr(\#C((1,1,1))=1) = \frac{1}{9}$, $\Pr(\#C((1,1,1))=2) = \frac{2}{3}$, $\Pr(\#C((1,1,1))=3) = \frac{2}{9}$. 
	
	Therefore, when stacking 3 shadow images under $(3,7)$-threshold, 
	the contrast is primarily divided into three levels: 
	
	\begin{footnotesize}
		\begin{align*}
			\alpha_1 &= \frac{{\Pr}(\#C(\vec{\lambda})=3)\times (\frac{1}{2})^{3-1}}{1+0} = \frac{1\times (\frac{1}{2})^2}{1+0} = \frac{1}{4},\\
			\alpha_2 &= \frac{{\Pr}(\#C(\vec{\lambda})=3)\times (\frac{1}{2})^{3-1}}{1+{\Pr}(\#C(\vec{\lambda})=2)\times (\frac{1}{2})^2} = \frac{\frac{1}{3}\times (\frac{1}{2})^2 }{1+\frac{2}{3}\times (\frac{1}{2})^2} = \frac{1}{14},\\
			\alpha_3 &= \frac{{\Pr}(\#C(\vec{\lambda})=3)\times (\frac{1}{2})^{3-1}}{1+{\Pr}(\#C(\vec{\lambda})=1)\times (\frac{1}{2}) + {\Pr}(\#C(\vec{\lambda})=2)\times (\frac{1}{2})^2}  \\
			&= \frac{\frac{2}{9}\times (\frac{1}{2})^2 }{1+\frac{1}{9}\times (\frac{1}{2}) + \frac{2}{3}\times (\frac{1}{2})^2} 
			= \frac{1}{22}.
		\end{align*}   
	\end{footnotesize}
	Moreover, the calculation of $\beta_i$ can be derived from Eq.~\eqref{eq:beta}, 
	which are $\beta_1 =  \frac{2}{35}$, $\beta_2 = \frac{24}{35}$, 
	and $\beta_3 = \frac{9}{35}$. 
	Thus, the contrast is 
	
	\begin{footnotesize}
		\begin{equation*}
			\Gamma = \alpha_1\beta_1 + \alpha_2\beta_2 + \alpha_3\beta_3 
			= \frac{1}{4}\times \frac{2}{35} + \frac{1}{14}\times \frac{24}{35} + \frac{1}{22}\times \frac{9}{35} 
			= \frac{202}{2695}.
		\end{equation*}
	\end{footnotesize}
	
\end{example}

In addition, 
we present lists of the values of contrast under  
$(3,5)$-threshold and $(4,5)$-threshold, 
which are demonstrated in Table \ref{tab:k=3,k=4,n=5}. 

\section{Experiments and Comparisons}\label{section:experiment}
In the study of VCS, 
contrast is the key metric to measure the schemes.
In this section, we experimentally evaluate the performance of our proposed $k$-grouped $(k, n)$ RGVCS and conduct a comparative analysis with existing mainstream schemes to validate its advantages.


\subsection{Image illustration}

In this subsection, we conduct experiments of $(3,7)$-threshold on the proposed $k$-grouped $(k,n)$ RGVCS and Shyu's RGVCS. 
The recovered images and specific contrast values are presented in Figure \ref{fig:927320} and Table \ref{tab:927320}. 
The experimental results demonstrate that, 
under the $(3,7)$-threshold, our proposed scheme achieves three-level recovery effects, 
compared to Shyu's scheme, 
showing superior recovery performance at the highest level, 
maintaining comparable results at the secondary level,  
and exhibiting weaker performance at the lowest level. 

\begin{figure}[t]
  \centering
  \subfloat[]{
    \includegraphics[width=0.2\textwidth]{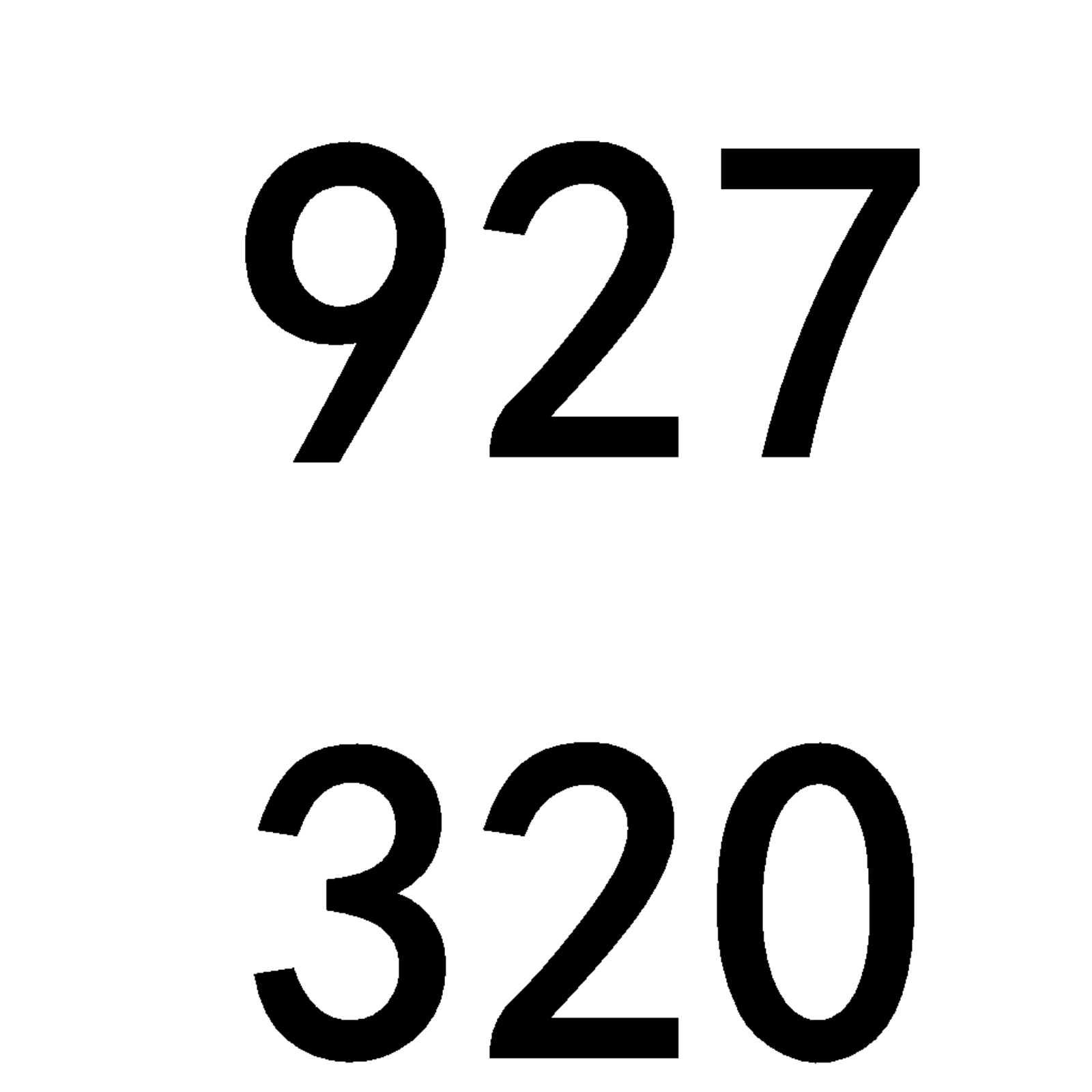}}
  \subfloat[]{
      \includegraphics[width=0.2\textwidth]{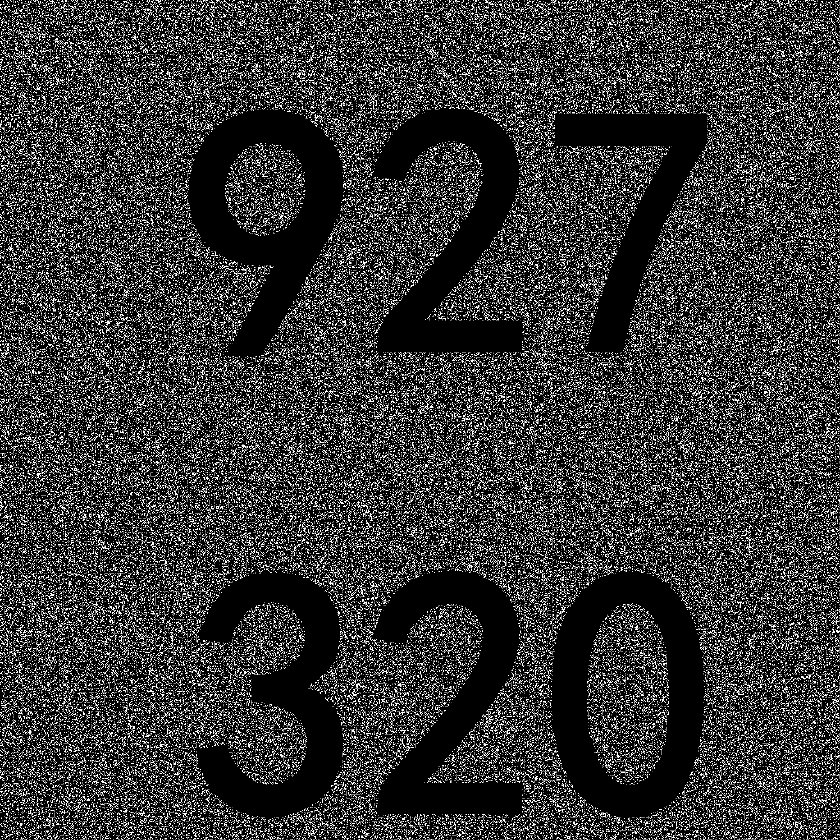}}
  \subfloat[]{
      \includegraphics[width=0.2\textwidth]{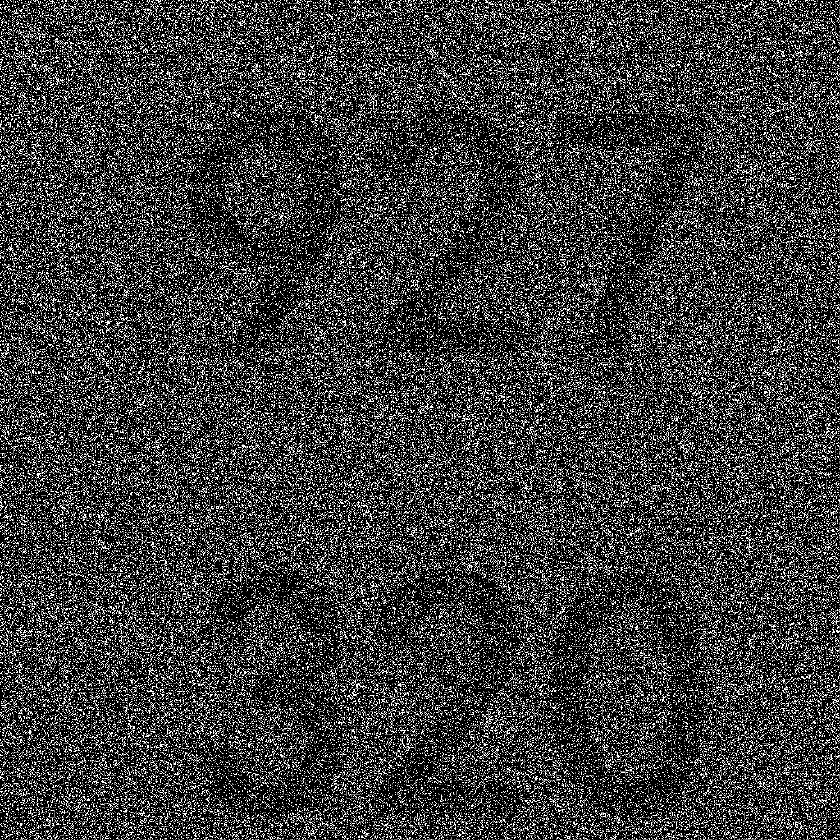}}
  \subfloat[]{
      \includegraphics[width=0.2\textwidth]{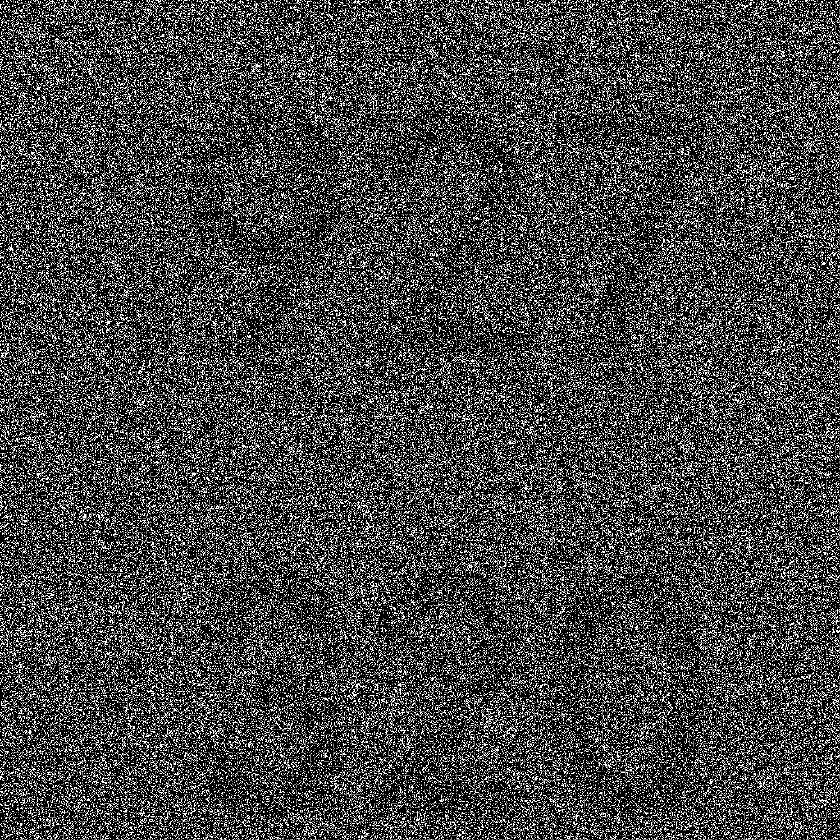}}
  \subfloat[]{
      \includegraphics[width=0.2\textwidth]{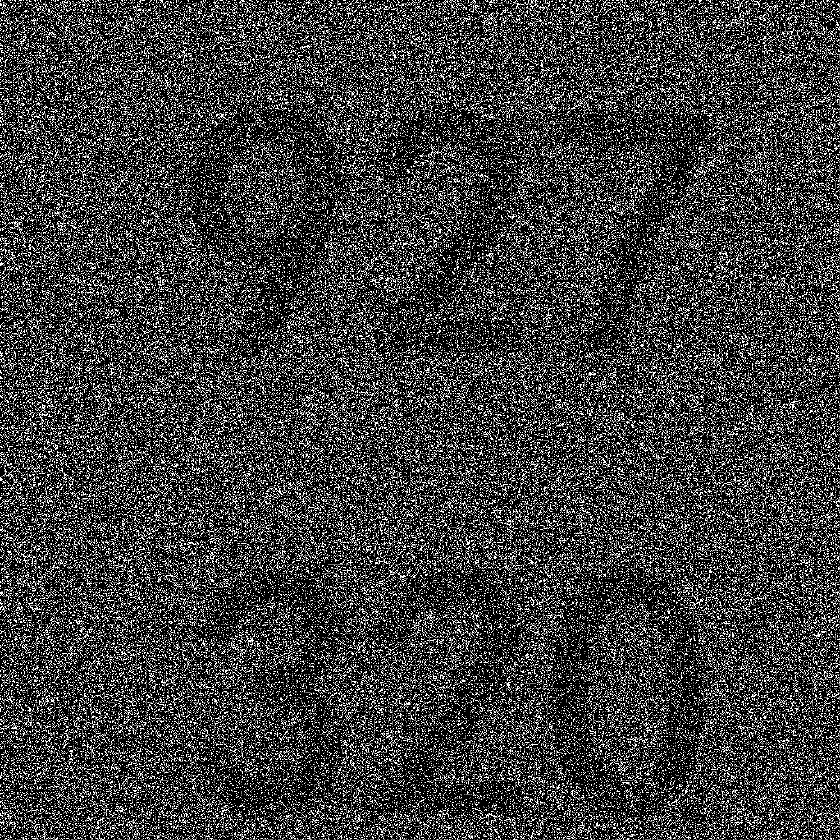}}
  \caption{The secret image and recovered images under $(3,7)$-threhold. 
  (a) The secret image; 
  (b)-(d) Three distinct recovery effects of our scheme; 
  (e) The recovery effect of Shyu's scheme.}
  \label{fig:927320}
  \vspace{-8pt} 
  \end{figure}
  \begin{table}[t] 
    \centering
    \caption{Comparison of theoretical and experimental contrast values under $(3,7)$-threshold}
    \label{tab:927320}
    \begin{adjustbox}{width=0.5\columnwidth}
      \small
      {\fontsize{4}{5}\selectfont
\begin{tabular}{ccc}
      \hline
                     & Experimental contrast & Theoretical contrast \\ \hline
      (b) & 0.2495                & 1/4                  \\
      (c) & 0.0706                & 1/14                 \\
      (d) & 0.0437                & 1/22                 \\ 
      (e)     & 0.0729                & 3/41                 \\ \hline
      \end{tabular}
      }
    \end{adjustbox}
    \vspace{-10pt} 
    \end{table}

\subsection{Comparisons with other schemes}
\begin{table*}
	\centering
	\caption{The contrast calculation for $k=3,n=5$ and $k=4,n=5$}\label{tab:k=3,k=4,n=5}
	\resizebox{\linewidth}{!}{
		\begin{tabular}{ccccccccccccccccc}
			\hline
			&  & \multicolumn{15}{c}{n = 5}                                                                                                                                                                              \\ \cline{3-11} \cline{13-17} 
			&  & \multicolumn{9}{c}{k = 3}                                                                                                    &  & \multicolumn{5}{c}{k = 4}                                             \\
			&  & \multicolumn{3}{c}{t = 3}                          &  & \multicolumn{3}{c}{t = 4}                        &  & t = 5          &  & \multicolumn{3}{c}{t = 4}                         &  & t = 5          \\ \hline
			i                          &  & 1             & \multicolumn{2}{c}{2}              &  & 1              & \multicolumn{2}{c}{2}           &  & 1              &  & 1            & \multicolumn{2}{c}{2}              &  & 1              \\
			valid partitions         &  & (3, 0)        & \multicolumn{2}{c}{(2, 1)}         &  & (3, 1)         & \multicolumn{2}{c}{(2, 2)}      &  & (3, 2)         &  & (4, 0)       & \multicolumn{2}{c}{(3, 1)}         &  & (4, 1)          \\
			$\{{l_1}^{\!\!d_1},{l_2}^{\!\!d_2},\ldots,{l_r}^{\!\!d_r}\}$              &  & $\{3^1,0^1\}$     & \multicolumn{2}{c}{$\{2^1, 1^1\}$} &  & $\{3^1, 1^1\}$ & \multicolumn{2}{c}{$\{2^2\}$}   &  & $\{3^1, 2^1\}$ &  & $\{4^1,0^1\}$    & \multicolumn{2}{c}{$\{3^1, 1^1\}$} &  & $\{4^1, 1^1\}$ \\
			x                          &  & 3             & 2                & 3               &  & 3              & 2               & 3             &  & 3              &  & 4            & 3                 & 4              &  & 4              \\
			$|\mathcal{E}^{\vec{\lambda}}_x(B)|$                      &  & 1             & 1                & 1               &  & 1              & 1               & 2             &  & 1              &  & 1            & 1                 & 1              &  & 1              \\
			$\Pr(\#C(\vec{\lambda}=x))$ &  & 1             & 2/3              & 1/3             &  & 1              & 1/3             & 2/3           &  & 1              &  & 1            & 3/4               & 1/4            &  & 1              \\
			$\alpha_i$                 &  & 1/4           & \multicolumn{2}{c}{1/14}           &  & 1/4            & \multicolumn{2}{c}{2/13}        &  & 1/4            &  & 1/8          & \multicolumn{2}{c}{1/35}           &  & 1/8            \\
			$\beta_i$                  &  & 1/10 & \multicolumn{2}{c}{9/10}  &  & 2/5    & \multicolumn{2}{c}{3/5} &  & 1        &  & 1/5  & \multicolumn{2}{c}{4/5}    &  & 1        \\\cline{3-11} \cline{13-17}
			$\Gamma$                   &  & \multicolumn{3}{c}{5/56}                           &  & \multicolumn{3}{c}{5/26}                         &  & 1/4            &  & \multicolumn{3}{c}{67/1400}                       &  & 1/8            \\ \hline
		\end{tabular}
	}
	\vspace{-10pt} 
\end{table*}
We conduct a comprehensive comparison of our $k$-grouped $(k,n)$ RGVCS 
with current mainstream schemes 
in terms of theoretical and experimental contrast, 
with the results summarized in Tables \ref{tab:theoretical contrast comparisons} and \ref{tab:experimental contrast comparisons}, 
where the best contrasts are highlighted in bold.
Additionally, 
Figure \ref{fig:Ours_Shyu_Yan} illustrates the contrast curves associated with 
the recovery of different participant combinations under the $(3,12)$-threshold for our scheme, Shyu's scheme,  
and Yan's scheme, respectively. 
The analysis allows us to draw the following conclusions.

  $\bullet$ 
  From the data presented in Tables \ref{tab:theoretical contrast comparisons} and \ref{tab:experimental contrast comparisons}, 
  the experimental values are consistent with the theoretical values. 
  Numerically speaking, our scheme has an advantage in all threshold scenarios.
  For the $(k,k)$-threshold, the theoretical contrast values of all schemes in Table \ref{tab:theoretical contrast comparisons} are uniformly $(\frac{1}{2})^{k-1}$. 
  For other $(k,n)$-thresholds where $n>k$, the contrast of our scheme is optimal. 
  Compared to other mainstream schemes, 
  our advantage lies in significantly increasing the probability of $\mathcal{K}$ 
  appearing in the generated $n$ bits, 
  while eliminating the need for a final global random arrangement operation. 
  This not only enhances the contrast but also makes the process more efficient and straightforward. 

  $\bullet$ 
  Analysis of Figure \ref{fig:Ours_Shyu_Yan} reveals that 
  our scheme exhibits a layering performance 
  when different shadow image combinations are selected for the recovery phase. 
 In contrast, the recovery effects of Shyu's and Yan's schemes remain unaffected by the selection of participant combinations. 
  As illustrated in Figure \ref{fig:Ours_Shyu_Yan}, 
  our scheme demonstrates superior reconstruction quality 
  at the first and the second level compared to the other two approaches, 
  and shows comparatively weaker results at the third level.
  It can be seen that our scheme is also suitable for scenarios where 
  different participants achieve varying effects in the recovery phase.

\begin{figure*}[t]
  \centering
  \includegraphics[width=0.9\linewidth, height=0.3\textheight, keepaspectratio=false]{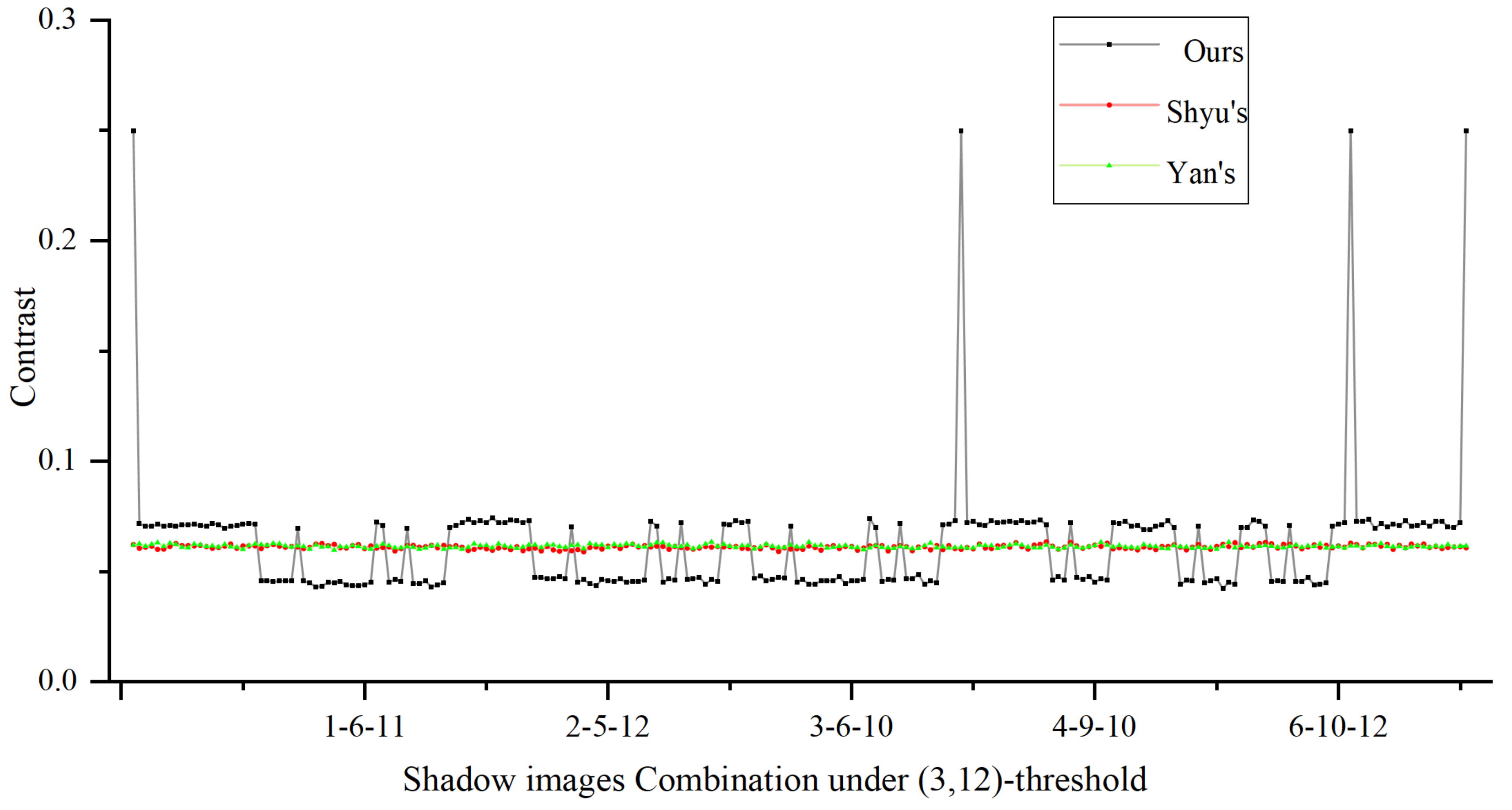}
  \caption{The contrast of all participant combinations in different schemes under $(3,12)$-threshold}
  \label{fig:Ours_Shyu_Yan}
  \vspace{-8pt} 
\end{figure*}

\section{Conclusion}\label{section:conclusion}
In this paper, 
we propose a novel sharing paradigm in RGVCS 
and introduce a higher-contrast $(k,n)$ RGVCS based on the paradigm. 
First, we demonstrate the layering contrast phenomenon under the new paradigm 
and provide a brand-new formula for contrast calculation. 
Second, we rigorously derive the contrast of the proposed scheme, 
and both theoretical analysis and experimental data confirm that 
our scheme achieves the current optimal contrast value. 
Future work will further explore 
the impact of the length of the initial bit group $n^{\prime}$ on the scheme 
and how to optimize our scheme further to achieve higher contrast.

\begin{table*}[]
  \centering
  \caption{The theoretical contrast comparisons with other $(k,n)$ schemes}\label{tab:theoretical contrast comparisons}
  \resizebox{\linewidth}{!}{
  \begin{tabular}{ccccccccccccccclcccc}
  \hline
        & \multicolumn{4}{c}{Chen \& Tsao} &  & \multicolumn{4}{c}{Wu \& Sun} &  & \multicolumn{4}{c}{Shyu} &  & \multicolumn{4}{c}{Ours}                                  \\ \cline{2-5} \cline{7-10} \cline{12-15} \cline{17-20} 
        & t=2    & t=3    & t=4    & t=5   &  & t=2   & t=3   & t=4   & t=5   &  & t=2 & t=3  & t=4  & t=5  &  & t=2            & t=3             & t=4              & t=5 \\ \hline
  (2,2) & 1/2    &        &        &       &  & 1/2   &       &       &       &  & 1/2 &      &      &      &  & 1/2            &                 &                  &     \\
  (2,3) & 1/7    & 1/4    &        &       &  & 2/7   & 1/2   &       &       &  & 2/7 & 1/2  &      &      &  & \textbf{3/10}  & 1/2             &                  &     \\
  (2,4) & 2/29   & 2/17   & 1/8    &       &  & 1/5   & 1/3   & 1/2   &       &  & 2/7 & 1/2  & 1/2  &      &  & \textbf{3/10}  & 1/2             & 1/2              &     \\
  (2,5) & 2/49   & 2/29   & 3/41   & 1/16  &  & 2/13  & 6/23  & 4/11  & 1/2   &  & 1/4 & 3/7  & 1/2  & 1/2  &  & \textbf{13/50} & \textbf{13/30}  & 1/2              & 1/2 \\
  (3,3) &        & 1/4    &        &       &  &       & 1/4   &       &       &  &     & 1/4  &      &      &  &                & 1/4             &                  &     \\
  (3,4) &        & 2/35   & 1/8    &       &  &       & 1/9   & 1/4   &       &  &     & 1/9  & 1/4  &      &  &                & \textbf{13/112} & 1/4              &     \\
  (3,5) &        & 1/44   & 4/83   & 1/16  &  &       & 1/16  & 3/22  & 1/4   &  &     & 2/23 & 4/21 & 1/4  &  &                & \textbf{5/56}   & \textbf{5/26}    & 1/4 \\
  (4,4) &        &        & 1/8    &       &  &       &       & 1/8   &       &  &     &      & 1/8  &      &  &                &                 & 1/8              &     \\
  (4,5) &        &        & 2/43   & 1/16  &  &       &       & 2/43  & 1/8   &  &     &      & 2/43 & 1/8  &  &                &                 & \textbf{67/1400} & 1/8 \\
  (5,5) &        &        &        & 1/16  &  &       &       &       & 1/16  &  &     &      &      & 1/16 &  &                &                 &                  & 1/16\\ \hline
  \end{tabular}
  }
  \end{table*}

\begin{table*}
	\caption{The experimental contrast comparisons with other $(k,n)$ schemes}\label{tab:experimental contrast comparisons}
	\centering
	\resizebox{\linewidth}{!}{
		\begin{tabular}{ccccccccccccccc}
			\hline
			& \multicolumn{4}{c}{Yan}           &  & \multicolumn{4}{c}{Shyu}          &  & \multicolumn{4}{c}{Ours}                                     \\ \cline{2-5} \cline{7-10} \cline{12-15} 
			& t=2    & t=3    & t=4    & t=5    &  & t=2    & t=3    & t=4    & t=5    &  & t=2             & t=3             & t=4             & t=5    \\ \hline
			(2,2) & 0.4989 &        &        &        &  & 0.5013 &        &        &        &  & 0.5000          &                 &                 &        \\
			(2,3) & 0.2888 & 0.5027 &        &        &  & 0.2884 & 0.5018 &        &        &  & \textbf{0.3039} & 0.5021          &                 &        \\
			(2,4) & 0.2868 & 0.5013 & 0.5013 &        &  & 0.2825 & 0.4962 & 0.4962 &        &  & \textbf{0.3005} & 0.5005          & 0.5005          &        \\
			(2,5) & 0.2492 & 0.4281 & 0.5001 & 0.5001 &  & 0.2488 & 0.4269 & 0.4979 & 0.4979 &  & \textbf{0.2594} & \textbf{0.4321} & 0.4980          & 0.4980 \\
			(3,3) &        & 0.2493 &        &        &  &        & 0.2494 &        &        &  &                 & 0.2476          &                 &        \\
			(3,4) &        & 0.1100 & 0.2499 &        &  &        & 0.1126 & 0.2513 &        &  &                 & \textbf{0.1172} & 0.2511          &        \\
			(3,5) &        & 0.0871 & 0.1897 & 0.2485 &  &        & 0.0856 & 0.1896 & 0.2500 &  &                 & \textbf{0.0918} & \textbf{0.1949}          & 0.2526 \\
			(4,4) &        &        & 0.1249 &        &  &        &        & 0.1253 &        &  &                 &                 & 0.1241          &        \\
			(4,5) &        &        & 0.0464 & 0.1250 &  &        &        & 0.0451 & 0.1239 &  &                 &                 & \textbf{0.0508} & 0.1269 \\
			(5,5) &        &        &        & 0.0608 &  &        &        &        & 0.0616 &  &                 &                 &                 & 0.0621 \\ \hline
		\end{tabular}
	}
	\vspace{-10pt} 
\end{table*}

\bibliography{reference}
\bibliographystyle{IEEEtran}
\end{document}